\authorrunning{N.~Alon, F.~V.~Fomin,  G.~Gutin, M.~Krivelevich, and S.~Saurabh}
\newtheorem{defn}[theorem]{Definition}
\newtheorem{fact}[theorem]{Fact}
\newtheorem{rem}[theorem]{Remark}
\newcommand{\2}{\vspace{0.2 cm}}
\title{Spanning directed trees with many leaves\thanks{Preliminary extended abstracts of this paper have been presented at
FSTTCS 2007  \cite{alonLNCS4596} and  ICALP 2007
\cite{AlonFGKS07fsttcs}}}
\author{
Noga Alon\inst{1} \and Fedor V. Fomin\inst{2}  \and Gregory
Gutin\inst{3} \and Michael Krivelevich\inst{1} \and Saket
Saurabh\inst{2}}
 \institute{Department of Mathematics, Tel Aviv
University\\ Tel Aviv 69978,
Israel\\\email{\{nogaa,krivelev\}@post.tau.ac.il}\and Department of
Informatics, University of Bergen\\ POB 7803, 5020 Bergen,
Norway\\
 \email{\{fedor.fomin,saket\}@ii.uib.no}
\and Department of Computer Science\\
 Royal Holloway, University of London\\
Egham, Surrey TW20 0EX, UK\\
\email{gutin@cs.rhul.ac.uk}
}
\begin{document}
\date{}
\maketitle

\begin{abstract}

The {\sc Directed Maximum Leaf Out-Branching}  problem is  to find
an out-branching (i.e. a rooted oriented spanning tree) in a given
digraph with the maximum number of leaves. In this paper, we obtain
two combinatorial results on the number of leaves in out-branchings.
We show that

\smallskip
\begin{itemize}
\item  every strongly connected $n$-vertex digraph $D$
 with minimum in-degree at least 3 has an out-branching
with at least $(n/4)^{1/3}-1$ leaves;
 \item if a strongly
connected digraph $D$ does not contain an out-branching with $k$
leaves, then the pathwidth of its underlying graph UG($D$) is
$O(k\log k)$. Moreover, if the digraph is acyclic, the pathwidth is
at most $4k$.
\end{itemize}
The last result implies that  it can be decided in time
$2^{O(k\log^2 k)}\cdot n^{O(1)}$ whether a strongly connected
digraph on $n$ vertices has an out-branching with at least $k$
leaves. On acyclic digraphs the running time of our algorithm is
$2^{O(k\log k)}\cdot n^{O(1)}$.


\end{abstract}

\section{Introduction}\label{introsec}
In this paper, we initiate the combinatorial and algorithmic study
of a natural generalization of the well studied {\sc Maximum Leaf
Spanning Tree (MLST)} problem on connected undirected graphs
\cite{bonsmaLNCS2747,DingJS01,estivill,fominGK06,FellowsMRS00,GalbiatMM97,GriggsW92,LuR98,Solis-Oba98}.
Given a digraph $D$, a subdigraph $T$ of $D$ is an {\em out-tree} if
$T$ is an oriented tree with only one vertex $s$ of in-degree zero
(called {\em the root}). If $T$ is a spanning out-tree, i.e.
$V(T)=V(D)$, then $T$ is called an {\em out-branching} of $D$. The
vertices of $T$ of out-degree zero are called {\em leaves}. The {\sc
Directed Maximum Leaf Out-Branching} (DMLOB) problem  is to find an
out-branching in a given digraph with the maximum number of leaves.

It is well-known that MLST is NP-hard for undirected graphs
\cite{GareyJ79}, which means that DMLOB is NP-hard for symmetric
digraphs (i.e., digraphs in which the existence of an arc $xy$
implies the existence of the arc $yx$) and, thus, for strongly
connected digraphs. We can show that DMLOB is NP-hard for acyclic
digraphs as follows: Consider a bipartite graph $G$ with bipartition
$X,Y$ and a vertex $s\not\in V(G)$. To obtain an acyclic digraph $D$
from $G$ and $s$, orient the edges of $G$ from $X$ to $Y$ and add
all arcs $sx$, $x\in X$. Let $B$ be an out-branching in $D$. Then
the set of leaves of $B$ is $Y\cup X'$, where $X'\subset X$, and for
each $y\in Y$ there is a vertex $z\in Z=X\setminus X'$ such that
$zy\in A(D)$. Observe that $B$ has maximum number of leaves if and
only if $Z\subseteq X$ is of minimum size among all sets
$Z'\subseteq X$ such that $N_G(Z')=X.$ However, the problem of
finding $Z'$ of minimum size such that $N_G(Z')=X$ is equivalent to
the Set Cover problem ($\{N_G(y)|\ y\in Y\}$ is the family of sets
to cover), which is NP-hard.

The combinatorial study of spanning trees with maximum number of
leaves in undirected graphs has an extensive history. Linial
conjectured around 1987 that every connected graph on $n$ vertices
with minimum vertex degree $\delta$ has a spanning tree with at
least $n(\delta-2)/(\delta+1)+c_\delta$ leaves, where $c_\delta$
depends on $\delta$. This is indeed the case for all $\delta \leq
5$. Kleitman and West \cite{KleitmanW91} and Linial and Sturtevant
\cite{LinialS87} showed that every connected undirected graph $G$ on
$n$ vertices with minimum degree at least $3$ has a spanning tree
with at least $n/4 + 2$ leaves. Griggs and Wu \cite{GriggsW92}
proved that the maximum number of leaves in a spanning tree is at
least $ n/2+2$ when $\delta=5$ and at least $2n/5+8/5$ when
$\delta=4$. All these results are tight. The situation is less clear
for $\delta \geq 6$; the first author observed that Linial's
conjecture is false for all large values of $\delta$. Indeed, the
results in \cite{Al} imply that there are undirected graphs with $n$
vertices and minimum degree $\delta$ in which no tree has more than
$(1-(1+o(1))\frac{\ln {(\delta+1)}}{\delta+1})n$ leaves, where the
$o(1)$-term tends to zero an $\delta$ tends to infinity, and this is
essentially tight. See also \cite{AS}, pp. 4-5 and \cite{CWY} for
more information.

 In this paper we prove an analogue of the Kleitman-West
result for directed graphs: every strongly connected digraph $D$ of
order $n$ with minimum in-degree at least 3 has an out-branching
with at least $(n/4)^{1/3}-1$ leaves.  We do not know whether this
bound is tight, however we show that there are strongly connected
digraphs with minimum in-degree 3 in which every out-branching has
at most $O(\sqrt{n})$ leaves.

Unlike its undirected counterpart which has attracted a lot of
attention in all algorithmic paradigms like approximation
algorithms~\cite{GalbiatMM97,LuR98,Solis-Oba98}, parameterized
algorithms~\cite{bonsmaLNCS2747,estivill,FellowsMRS00}, exact
exponential time algorithms~\cite{fominGK06} and also combinatorial
studies~\cite{DingJS01,GriggsW92,KleitmanW91,LinialS87}, the {\sc
Directed Maximum Leaf Out-Branching} problem has largely been
neglected until recently. The only paper we are aware of  is the
very recent paper \cite{drescher} that describes an
$O(\sqrt{\mbox{\sc opt}})$-approximation algorithms for DMLOB.

Our second combinatorial result relates the number of leaves in a
DMLOB of a directed graph $D$ with the pathwidth of its underlying
graph UG($D$). (We postpone the definition of pathwidth till the
next section.) If an  undirected graph $G$ contains
 a star $K_{1,k}$ as a minor, then it is possible to construct a spanning tree with at
 least $k$ leaves from this minor. Otherwise, there is no
 $K_{1,k}$  minor in $G$, and it is possible to prove  that the
pathwidth of $G$ is  $O(k)$. (See, e.g.  \cite{Bodlaender89}.)
Actually, a much more general result due to Bienstock et al.
\cite{BienstockRST91}) is that  any undirected graph of pathwidth at
least $k$, contains all trees on $k$ vertices as a minor. We prove a
result that can be viewed as a generalization of known bounds on the
number of leaves in a spanning tree of an undirected graph in terms
of its pathwidth, to strongly connected digraphs. We show that
either a strongly connected digraph $D$ has a DMLOB with at least
$k$ leaves or the pathwidth of UG($D$) is $O(k\log k)$. For an
acyclic digraph with a DMLOB having $k$ leaves, we prove  that the
pathwidth is at most $4k$. This almost matches the bound for
undirected graphs. These combinatorial results are useful in the
design of parameterized algorithms.

 In parameterized algorithms, for decision problems with input
size $n$, and a parameter $k$, the goal is to design an algorithm
with runtime $f(k)n^{O(1)}$, where $f$ is a function of $k$ alone.
(For DMLOB such a parameter is the number of leaves in the
out-tree.) Problems having such an algorithm are said to be fixed
parameter tractable (FPT). The book by Downey and Fellows
\cite{downey1999} provides an introduction to the topic of
parameterized complexity. For recent developments see the books by
Flum and Grohe \cite{FlumGrohebook} and by Niedermeier
\cite{Niedermeierbook06}.

The parameterized version of DMLOB is defined as follows: Given a
digraph $D$ and a positive integral parameter $k$, does $D$ contain
an out-branching with at least $k$ leaves? We denote the
parameterized versions of DMLOB by $k$-DMLOB. If in the above
definition we do not insist on an out-branching and ask whether
there exists an out-tree with at least $k$ leaves, we get the
parameterized {\sc Directed Maximum Leaf Out-Tree} problem (denoted
$k$-DMLOT).

Our combinatorial bounds, combined with dynamic programming on
graphs of bounded pathwidth imply the first parameterized algorithms
for $k$-DMLOB on strongly connected digraphs and acyclic digraphs.
We remark that the algorithmic results presented here also hold for
all digraphs if we consider $k$-DMLOT rather than $k$-DMLOB. This
answers an open question of Mike Fellows
\cite{Cesati06,fellows,gutin}. However, we mainly restrict ourselves
to $k$-DMLOB for clarity and the harder challenges it poses, and we
briefly consider $k$-DMLOT only in the last section.

Very recently, using a modification of our approach, Bonsma and Dorn
\cite{BoDo} proved that either an arbitrary digraph $D$ has an
out-branching with at most $k$ leaves or the pathwidth of UG($D'$)
is $O(k^3)$, where $D'$ is the digraph obtained from $D$ by deleting
all arcs not contained in any out-branching of $D.$ The bound
$O(k^3)$ is much larger than our bounds for strongly connected and
acyclic digraphs, but it suffices to allow Bonsma and Dorn to show
that $k$-DMLOB is FPT,  settling another open question of Fellows
\cite{fellows,gutin}.

\smallskip
This paper is organized as follows. In Section \ref{prelim} we
provide additional terminology and notation as well as some
well-known results. We introduce locally optimal out-branchings in
Section \ref{obsec}. Bounds on the number of leaves in maximum leaf
out-branchings of strongly connected and acyclic digraphs are
obtained in Section \ref{cbsec}. In Section
\ref{sec:pathewidth_of_underlying_graphs} we prove upper bounds on
the pathwidth of the underlying graph of strongly connected and
acyclic digraphs that do not contain out-branchings with at least
$k$ leaves.  In Section~\ref{dopsec} we conclude with  discussions
and open problems.

\section{Preliminaries}
\label{prelim} Let $D$ be a digraph. By $V(D)$ and $A(D)$ we
represent the vertex set and arc set of $D$, respectively. An {\em
oriented graph} is a digraph with no directed 2-cycle. Given a
subset $V'\subseteq V(D)$ of a digraph $D$, let $D[V']$ denote the
digraph induced by $V'$. The {\em underlying graph} UG($D$) of $D$
is obtained from $D$ by omitting all orientations of arcs and by
deleting one edge from each resulting pair of parallel edges. The
{\em connectivity components} of $D$ are the subdigraphs of $D$
induced by the vertices of components of UG($D$). A digraph $D$ is
{\em strongly connected} if, for every pair $x,y$ of vertices there
are directed paths from $x$ to $y$ and from $y$ to $x.$ A maximal
strongly connected subdigraph of $D$ is called a {\em strong
component}. A vertex $u$ of $D$ is an {\em in-neighbor} ({\em
out-neighbor}) of a vertex $v$ if $uv\in A(D)$ ($vu\in A(D)$,
respectively). The {\em in-degree} $d^-(v)$ ({\em out-degree}
$d^+(v)$) of a vertex $v$ is the number of its in-neighbors
(out-neighbors).

We denote by $\ell(D)$ the maximum number of leaves in an out-tree
of a digraph $D$ and by $\ell_s(D)$ we denote the maximum possible
number of leaves in an out-branching of a digraph $D$. When $D$ has
no out-branching, we write $\ell_s(D)=0$. The following simple
result gives necessary and sufficient conditions for a digraph to
have an out-branching. This assertion allows us to check whether
$\ell_s(D)>0$ in time $O(|V(D)|+|A(D)|)$.

\begin{proposition}[\cite{bang2000}]\label{iffoutb}
A digraph $D$ has an out-branching if and only if $D$ has a unique
strong component with no incoming arcs.
\end{proposition}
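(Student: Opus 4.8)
The plan is to reduce the statement to an elementary reachability property of the acyclic \emph{condensation} of $D$. First I would observe that $D$ has an out-branching if and only if there is a vertex $s$ from which every vertex of $D$ is reachable by a directed path. Indeed, any out-branching rooted at $s$ certifies such reachability; conversely, if all vertices are reachable from $s$, a breadth-first (or depth-first) search from $s$ produces a spanning out-tree rooted at $s$, selecting for each $v \neq s$ a single incoming tree-arc from its search predecessor, so that $s$ is the unique vertex of in-degree zero. Thus the task becomes to characterise when such a \emph{dominating} root $s$ exists.

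Next I would pass to the condensation $D^{*}$, the digraph whose vertices are the strong components of $D$ and in which there is an arc from a component $C$ to a component $C'$ whenever $D$ has an arc from some vertex of $C$ to some vertex of $C'$. Since every vertex of a strong component can reach every other vertex of that same component, a vertex $s$ reaches all of $V(D)$ in $D$ if and only if the component $C_s$ containing $s$ reaches every vertex of $D^{*}$. Moreover $D^{*}$ is acyclic, and a strong component of $D$ has no incoming arcs in $D$ precisely when the corresponding vertex of $D^{*}$ has in-degree zero, i.e. is a \emph{source} of $D^{*}$. The statement now reduces to: an acyclic digraph $D^{*}$ has a vertex reaching all others if and only if it has exactly one source.

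The core of the argument is the two directions of this last equivalence. If $D^{*}$ has two distinct sources, then neither is reachable from the other (a source has no incoming arc), so no single vertex can reach both, and no dominating vertex exists; hence the existence of a dominating vertex forces a unique source. For the converse, suppose $D^{*}$ has a unique source $r$. Given an arbitrary vertex $v$, I would trace arcs backwards from $v$; since $D^{*}$ is finite and acyclic this backward walk cannot repeat a vertex and must terminate at a vertex of in-degree zero, which can only be $r$. Reversing this walk yields a directed path from $r$ to $v$, so $r$ reaches every vertex and is the desired dominating vertex.

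The only delicate point, and the step I would treat most carefully, is this backward-tracing argument in the acyclic condensation: one must use acyclicity both to guarantee termination of the backward walk and to conclude that its endpoint is genuinely a source rather than lying on a cycle. Everything else is routine bookkeeping---verifying that the search construction yields a bona fide out-tree with a unique in-degree-zero root, and that the correspondence between strong components with no incoming arcs and sources of $D^{*}$ is exact.
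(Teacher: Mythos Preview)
Your argument is correct and is essentially the standard proof of this folklore fact. Note, however, that the paper does not supply its own proof of this proposition: it is quoted from the reference \cite{bang2000} (Bang-Jensen and Gutin's monograph) and used as a black box. The approach you outline---reduce to the acyclic condensation $D^{*}$, observe that reachability of all of $V(D)$ from a vertex $s$ is equivalent to reachability of all of $V(D^{*})$ from the component $C_{s}$, and then argue that an acyclic digraph has a vertex dominating all others if and only if it has a unique source---is precisely the argument one finds in that reference, so there is nothing to contrast.

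One very minor point of phrasing: when you say ``a source has no incoming arc, so no single vertex can reach both'', the implicit step is that in $D^{*}$ the only vertex from which a source is reachable is the source itself (any nontrivial directed walk ending at a source would require an arc into it). You clearly have this in mind, and it is straightforward, but in a fully written-out proof it is worth stating explicitly rather than leaving it to the parenthetical.
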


Let $P=u_1u_2\ldots u_q$ be a directed path in a digraph $D$. An arc
$u_iu_j$ of $D$ is a {\em forward} ({\em backward}) {\em arc for}
$P$ if $i\le j-2$ ($j<i$, respectively). Every backward arc of the
type $v_{i+1}v_i$ is called {\em double}.

For a natural number $n$, $[n]$ denotes the set $\{1,2,\ldots ,n\}.$

A {\em tree decomposition} of an (undirected) graph $G$ is a pair
$(X,U)$ where $U$ is a tree whose vertices we will call {\em nodes}
and $X=(\{X_{i} \mid i\in V(U)\})$ is a collection of subsets of
$V(G)$ such that
\begin{enumerate}
\item $\bigcup_{i \in V(U)} X_{i} = V(G)$,

\item for each edge $\{v,w\} \in E(G)$, there is an $i\in V(U)$
such that $v,w\in X_{i}$, and

\item for each $v\in V(G)$ the set of nodes $\{ i \mid v \in X_{i}
\}$ forms a subtree of $U$.
\end{enumerate}
The {\em width} of a tree decomposition $(\{ X_{i} \mid i \in V(U)
\}, U)$ equals $\max_{i \in V(U)} \{|X_{i}| - 1\}$. The {\em
treewidth} of a graph $G$ is the minimum width over all tree
decompositions of $G$.

If in the definitions of a tree decomposition and treewidth we
restrict $U$ to be a path, then we have the definitions of path
decomposition and pathwidth. We use the notation $tw(G)$ and $pw(G)$
to denote the treewidth and the pathwidth of a graph $G$.

We also need an equivalent definition of pathwidth in terms of
vertex separators with respect to a linear ordering of the vertices.
Let $G$ be a graph and let $\sigma=(v_1,v_2,\ldots ,v_n)$ be an
ordering of $V(G)$. For $j\in [n]$ put $V_j =\{v_i:\ i\in [j]\}$ and
denote by $\partial V_j$ all vertices of $V_j$ that have neighbors
in $V\setminus V_j.$ Setting $ vs(G,\sigma) = \max_{i\in [n]}
|\partial V_i | , $ we define the \emph{vertex separation}  of  $G$
as
\[
vs(G) = \min \{vs(G,\sigma) \colon \sigma \mbox{ is an ordering of }
V(G)\}.
\]

The following assertion is well-known.  It follows directly from the
results of Kirousis and Papadimitriou \cite{KirousisP85} on interval
width  of a graph, see also \cite{Kinnersley92}.

\begin{proposition}[\cite{Kinnersley92,KirousisP85}]\label{sovp_pw_vs}
For any graph $G$, $vs(G)=pw(G)$.
\end{proposition}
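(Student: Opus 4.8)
The plan is to establish the two inequalities $pw(G)\le vs(G)$ and $vs(G)\le pw(G)$ separately, each by an explicit conversion between a path decomposition and a linear ordering of $V(G)$; combining them yields the claimed equality.

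For $pw(G)\le vs(G)$, I would start from an ordering $\sigma=(v_1,\dots,v_n)$ attaining the vertex separation, so that $|\partial V_j|\le vs(G)=:w$ for every $j$. Define bags $X_j:=\{v_j\}\cup\partial V_{j-1}$ (with $\partial V_0=\emptyset$) and arrange them along a path $X_1,X_2,\dots,X_n$. Since $v_j\notin V_{j-1}\supseteq\partial V_{j-1}$, we have $|X_j|=1+|\partial V_{j-1}|\le w+1$, so the width is at most $w$. The three axioms are then checked directly: every $v_j$ lies in $X_j$; for an edge $\{v_i,v_m\}$ with $i<m$ one verifies that $X_m$ contains both endpoints, because $v_i\in V_{m-1}$ has the neighbour $v_m$ outside $V_{m-1}$ and hence $v_i\in\partial V_{m-1}$; and the set of indices $j$ with a fixed vertex $v_i\in X_j$ is the contiguous block from $i$ up to the position of the last neighbour of $v_i$, giving the interval property.

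For the reverse inequality $vs(G)\le pw(G)$, I would take a path decomposition $(X_1,\dots,X_p)$ of width $w=pw(G)$ and order $V(G)$ by first appearance, letting $f(v)=\min\{i:v\in X_i\}$ and breaking ties inside a common bag by a fixed rule. The core observation is that for each $j<n$, writing $i'=f(v_{j+1})$, one has $\partial V_j\subseteq X_{i'}$: any $u\in\partial V_j$ has a neighbour $w\notin V_j$, so $f(w)\ge i'$, the edge $\{u,w\}$ lies in some bag $X_m$ with $m\ge f(w)\ge i'$, and since $f(u)\le i'\le m$ the connectivity axiom forces $u\in X_{i'}$. The main obstacle is the apparent off-by-one: this inclusion only yields $|\partial V_j|\le w+1$. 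It is resolved by noting that $v_{j+1}\in X_{i'}$ while $v_{j+1}\notin V_j\supseteq\partial V_j$, so $v_{j+1}$ is a witness vertex of $X_{i'}$ lying outside $\partial V_j$; hence $|\partial V_j|\le|X_{i'}|-1\le w$. For $j=n$ one has $\partial V_n=\emptyset$, and so $vs(G,\sigma)\le w$, completing the argument.

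I expect the delicate part to be exactly this last off-by-one bookkeeping, together with the care needed in the within-bag tie-break so that ``position $\ge j+1$'' correctly implies ``$f\ge f(v_{j+1})$''; the remaining verifications of the decomposition axioms are routine applications of the connectivity property.
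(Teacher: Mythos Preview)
Your argument is correct, but note that the paper does not actually prove this proposition: it quotes it as a known result, attributing it to Kirousis--Papadimitriou and Kinnersley, and gives no argument of its own. So there is nothing in the paper to compare your proof against; what you have written is a clean, self-contained derivation of the cited equality.

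For the record, both directions of your proof are sound. In the $pw(G)\le vs(G)$ direction your bags $X_j=\{v_j\}\cup\partial V_{j-1}$ do satisfy all three axioms, and your interval check is right because $v_i\in X_j$ forces $j\ge i$, while $v_i\in X_c$ with $c>i$ forces a neighbour past position $c-1$ and hence past every intermediate $b-1$ as well. In the $vs(G)\le pw(G)$ direction the off-by-one is handled exactly as you say, and ordering lexicographically by $(f(v),\text{tie-break})$ guarantees that position${}\ge j{+}1$ implies $f\ge f(v_{j+1})$, which is all you use.
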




\section{Locally Optimal Out-Branchings}\label{obsec}
Our bounds are based on finding locally optimal out-branchings.
Given a digraph, $D$ and an out-branching $T$, we call a vertex {\em
leaf}, {\em link} and {\em branch} if its out-degree in $T$ is $0$,
$1$ and $\geq 2$ respectively. Let $S^{+}_{\geq 2 }(T)$ be the set
of branch vertices, $S^{+}_{1 }(T)$ the set of link vertices and
$L(T)$ the set of leaves in the tree $T$. Let $\mathscr{P}_2(T)$ be
the set of maximal paths consisting of link vertices. By $p(v)$ we
denote the {\em parent} of a vertex $v$ in $T$; $p(v)$ is the unique
in-neighbor of $v.$ We call a pair of vertices $u$ and $v$ {\em
siblings} if they do not belong to the same path from the root $r$
in $T$. We start with the following well known and easy to observe
facts.
\begin{fact}
 $|S^{+}_{\geq 2 }(T)| \leq |L(T)|-1$.
\end{fact}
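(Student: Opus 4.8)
The plan is to prove this by a global counting argument on the out-tree $T$, balancing the number of arcs against the out-degrees of the three types of vertices. First I would record the two basic identities available in any out-branching $T$ on $n = |V(T)|$ vertices. Since $T$ is an oriented spanning tree with a single root, it has exactly $n-1$ arcs, and each arc contributes $1$ to the out-degree of its tail, so $\sum_{v \in V(T)} d^+(v) = n-1$. Second, every vertex is a leaf, a link, or a branch vertex, giving the partition $|L(T)| + |S^{+}_{1}(T)| + |S^{+}_{\geq 2}(T)| = n$.

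Next I would lower-bound the degree sum by the minimum out-degree guaranteed in each class. Leaves contribute $0$, link vertices contribute exactly $1$ each, and each branch vertex contributes at least $2$ by definition of $S^{+}_{\geq 2}(T)$. Hence
\[
n-1 \;=\; \sum_{v \in V(T)} d^+(v) \;\geq\; |S^{+}_{1}(T)| + 2\,|S^{+}_{\geq 2}(T)|.
\]
The final step is pure substitution: using the partition identity to eliminate $|S^{+}_{1}(T)| = n - |L(T)| - |S^{+}_{\geq 2}(T)|$ and plugging this into the inequality yields
\[
n-1 \;\geq\; \bigl(n - |L(T)| - |S^{+}_{\geq 2}(T)|\bigr) + 2\,|S^{+}_{\geq 2}(T)| \;=\; n - |L(T)| + |S^{+}_{\geq 2}(T)|,
\]
which rearranges directly to $|S^{+}_{\geq 2}(T)| \leq |L(T)| - 1$, as claimed.

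There is essentially no hard part here, since the statement is a structural triviality for trees; the only point demanding any care is the bookkeeping that isolates link vertices as contributing \emph{exactly} $1$ (not merely at least $1$), which is what makes the two-sided estimate tight enough to yield the bound. As an alternative I would keep in reserve a short induction on $|V(T)|$, removing a leaf together with the path of link vertices above it up to the nearest branch vertex and tracking how $|L(T)|$ and $|S^{+}_{\geq 2}(T)|$ change; but the degree-counting argument above is cleaner and avoids case analysis, so that is the route I would present.
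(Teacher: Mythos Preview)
Your argument is correct: the degree-sum identity $\sum_v d^+(v)=n-1$ together with the partition into leaves, link vertices, and branch vertices gives the inequality exactly as you wrote it. The paper itself does not supply a proof of this fact---it is stated as a well-known and easy observation---so there is no alternative approach to compare against; your counting argument is the standard verification.
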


\begin{fact}
 $|\mathscr{P}_2(T)| \leq 2 |L(T)|-1$.
\end{fact}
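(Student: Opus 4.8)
The plan is to exhibit an injection from the set $\mathscr{P}_2(T)$ of maximal link-paths into the set of leaves and branch vertices, and then invoke Fact~\ref{...} (that $|S^{+}_{\geq 2}(T)|\le |L(T)|-1$) to finish. First I would observe that the maximal paths of link vertices partition the link vertices $S^{+}_1(T)$: every link vertex has out-degree exactly $1$ in $T$, so in the subgraph of $T$ induced on $S^{+}_1(T)$ each vertex has at most one in-link-neighbor (its parent, if a link vertex) and at most one out-link-neighbor (its child, if a link vertex), whence this induced subgraph is a disjoint union of directed paths, and these are precisely the members of $\mathscr{P}_2(T)$.

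The key step is the injection. Take any maximal link-path $P=w_1w_2\cdots w_t\in\mathscr{P}_2(T)$. Since $w_t$ is a link vertex it has a unique child $c$, and by maximality of $P$ this child $c$ is not a link vertex; hence $c\in L(T)\cup S^{+}_{\geq 2}(T)$. Define $f(P)=c$. The map $f$ is injective because $c$ has a unique in-neighbor $p(c)=w_t$, and $w_t$ lies on exactly one maximal link-path by the partition property above; so $f(P)=f(P')$ forces $w_t=w'_{t'}$ and therefore $P=P'$. Consequently
\[
|\mathscr{P}_2(T)|\le |L(T)|+|S^{+}_{\geq 2}(T)|.
\]

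Finally I would substitute the bound $|S^{+}_{\geq 2}(T)|\le |L(T)|-1$ from the previous Fact to obtain $|\mathscr{P}_2(T)|\le 2|L(T)|-1$, as required. The only point demanding care—and thus the main obstacle—is choosing the right endpoint to map from: mapping a path to the \emph{child of its bottom vertex} $w_t$ yields a clean injection into $L(T)\cup S^{+}_{\geq 2}(T)$, whereas mapping from the top vertex $w_1$ is messier, since its parent may be the root and a single branch vertex (or the root) can be the parent of several link-path tops, so one would instead have to argue via a sum of out-degrees. I would therefore use the bottom-child map and simply verify the two easy facts (the partition of link vertices and the non-link status of $c$) that make it well defined and injective.
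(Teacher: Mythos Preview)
Your argument is correct: the map sending a maximal link-path to the unique child of its last vertex is a well-defined injection into $L(T)\cup S^{+}_{\geq 2}(T)$, and combining this with Fact~1 yields the bound. The paper itself does not supply a proof---it simply records the inequality as a ``well known and easy to observe'' fact---so there is nothing to compare against; your write-up is exactly the standard justification one would give.
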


Now we define the notion of local exchange which is intensively used
in our proofs.
\begin{defn}
{\sc $\ell$-Arc Exchange ($\ell$-AE) optimal out-branching:} An
out-branching $T$ of  a directed graph $D$ with $k$ leaves is
$\ell$-AE optimal if
 for all arc subsets $F\subseteq A(T)$ and $X \subseteq A(D)-A(T)$ of size $\ell$,
 $(A(T)\setminus F) \cup X$  is either not an out-branching,  or an out-branching
with at most $ k$ leaves.
 In other words, $T$ is $\ell$-AE optimal if it can't be turned into an
out-branching with more leaves by exchanging
 $\ell$ arcs.
\end{defn}
Let us remark, that for every fixed $\ell$, an $\ell$-AE optimal
out-branching can be obtained in polynomial time. In our proofs we
use only $1$-AE optimal out-branchings. We need the following simple
properties of $1$-AE optimal out-branchings.
\begin{lemma}
\label{char1ae} Let $T$ be an $1$-AE optimal  out-branching rooted
at $r$ in a digraph $D$. Then the following holds:
\begin{itemize}
\item[(a)] For every pair of siblings $u,v\in V(T)\setminus L$ with  $d^{+}_T(p(v))
= 1$, there is no arc $e=(u,v)\in A(D)\setminus A(T)$;
\item[(b)] For every pair of vertices $u,v \notin L$, $d^{+}_T(p(v)) = 1$, which are
 on the same path from the root with $dist(r,u) < dist(r,v)$
there is no arc $e=(u,v)\in A(D)\setminus A(T)$ (here $dist(r,u)$ is
the distance to $u$ in $T$ from the root $r$);
\item[(c)] There is no arc $(v,r)$, $v \notin L$ such that the
directed cycle formed by the $(r,v)$-path and the arc $(v,r)$
contains a  vertex $x$ such that $d^{+}_T(p(x)) = 1$.
\end{itemize}
\end{lemma}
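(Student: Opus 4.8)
The plan is to prove each of the three properties (a), (b), (c) by the same contradiction strategy: assume the forbidden arc exists, then perform a single arc exchange (adding that arc while deleting exactly one tree arc) to produce a new out-branching with strictly more leaves, contradicting $1$-AE optimality. The whole proof rests on understanding which arc to delete so that (i) the result is still an out-branching — that is, still a spanning out-tree, so every vertex except $r$ must retain exactly one in-arc and no directed cycle may be created — and (ii) the leaf count strictly increases.

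For part (a), suppose $e=(u,v)\in A(D)\setminus A(T)$ with $u,v$ siblings, both non-leaves, and $d^+_T(p(v))=1$. The natural exchange is to delete the current in-arc $(p(v),v)$ of $v$ and add $e=(u,v)$. First I would check this yields an out-branching: since $u$ and $v$ are siblings, $v$ is not an ancestor of $u$, so redirecting $v$'s parent to $u$ creates no cycle, and every vertex still has in-degree one. Then I would count leaves: $v$ was already not a leaf and stays non-leaf; the new parent $u$ was a branch or link and only gains out-degree, so stays a non-leaf; the crucial gain is the old parent $p(v)$, which had $d^+_T(p(v))=1$, so after losing its only child $v$ it becomes a leaf. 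Hence the leaf count goes up by at least one, the contradiction. Part (b) is essentially identical: $u,v$ lie on the same root-path with $dist(r,u)<dist(r,v)$, so $u$ is a proper ancestor of $v$; again delete $(p(v),v)$ and add $(u,v)$. Because $u$ strictly precedes $v$, re-parenting $v$ to its ancestor $u$ does not disconnect $v$'s subtree from the root and creates no cycle (the path from $r$ through $u$ to $v$ shortens but stays acyclic), and again $p(v)$ loses its unique child and becomes a new leaf.

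Part (c) is the one I expect to be the main obstacle, since adding an arc $(v,r)$ into the root naively destroys the out-branching property — the root must keep in-degree zero. The hypothesis gives a vertex $x$ on the $(r,v)$-path with $d^+_T(p(x))=1$; the idea is to reroute along the directed cycle formed by the $(r,v)$-path together with $(v,r)$. The plan is to delete the tree arc $(p(x),x)$ entering $x$ and instead use the arc $(v,r)$ to re-root the cycle so that $x$ receives its in-arc via the path $x\to\cdots\to v\to r\to\cdots$; concretely, one reverses the orientation of "parenthood" around the segment so that $r$ again has in-degree zero while $x$ now gets covered through the new arc $(v,r)$ and the existing arcs along the cycle. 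The delicate points are to verify carefully that exactly one vertex ($x$'s old parent $p(x)$, whose out-degree drops from $1$ to $0$) becomes a new leaf while no existing leaf is lost, and that the reconfiguration is genuinely a single arc exchange (one arc removed, one added) yielding a valid spanning out-tree with in-degree one everywhere except $r$ and no directed cycle. I would handle this by explicitly describing the in-arc of every vertex in the modified tree and checking acyclicity via the fact that $r$ regains a source position, then confirming the net $+1$ leaf gain to reach the contradiction.
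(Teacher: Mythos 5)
Your overall strategy---exhibit a single arc exchange that yields an out-branching with one more leaf, contradicting $1$-AE optimality---is exactly the argument the paper has in mind (its own proof is a one-sentence appeal to this fact), and your treatments of (a) and (b) are complete and correct: in both cases $u$ lies outside the subtree of $T$ rooted at $v$ (because $u,v$ are siblings in (a), and because $u$ is a proper ancestor of $v$ in (b)), so deleting $(p(v),v)$ and adding $(u,v)$ keeps a spanning out-tree; the only vertex whose out-degree drops is $p(v)$, which goes from a link vertex to a leaf, so the leaf count strictly increases.

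For (c) you have chosen the right exchange---delete $(p(x),x)$ and add $(v,r)$---but your verification sketch misdescribes the resulting object, and as written it would fail. After this exchange $r$ does \emph{not} regain in-degree zero: it acquires the in-arc $(v,r)$, and no orientations are ``reversed.'' The correct picture is that the new digraph is an out-branching rooted at $x$ (a different root is perfectly acceptable under the definition of $1$-AE optimality, which only asks whether the exchanged arc set forms \emph{some} out-branching with more leaves). Concretely: every vertex other than $x$ has in-degree exactly one, $x$ has in-degree zero, and every vertex is reachable from $x$, since from $x$ one follows the $(x,v)$-segment of the old $(r,v)$-path, then the arc $(v,r)$ to $r$, and from $r$ one reaches every vertex of $T$ not in the subtree $T_x$ (the vertices of $T_x$ being reachable from $x$ directly); with $n-1$ arcs and all vertices reachable from a single source, the result is acyclic. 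The leaf count again rises by one because $p(x)$ is the only vertex whose out-degree drops, from $1$ to $0$ (note $d^+_T(p(x))=1$ forces $x\ne r$, so $p(x)$ exists). So the gap is not in the choice of which arc to remove but in the claim that ``$r$ regains a source position'': the acyclicity and connectivity check must be run from the new root $x$, not from $r$.
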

\begin{proof}
The proof easily follows from the fact that the existence of any of
these arcs contradicts the local optimality of $T$ with respect to
$1$-AE. \qed
\end{proof}

\section{Combinatorial Bounds}\label{cbsec}

We start with a lemma that allows us to obtain lower bounds on
$\ell_s(D)$.
\begin{lemma}\label{comblemma}
Let $D$ be a oriented graph of order $n$ in which every vertex is of
in-degree 2 and let $D$ have an out-branching. If $D$ has no
out-tree with $k$ leaves, then $n\le 4k^3.$
\end{lemma}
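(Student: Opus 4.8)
The plan is to fix a $1$-AE optimal out-branching $T$, which exists and is computable in polynomial time, and to bound $n$ through the structure of $T$. Write $\lambda=|L(T)|$ for its number of leaves. Since an out-branching is in particular an out-tree, the hypothesis that $D$ has no out-tree with $k$ leaves gives $\lambda\le\ell(D)\le k-1$. By the two Facts, $T$ has at most $\lambda-1<k$ branch vertices and its link vertices decompose into at most $2\lambda-1<2k$ maximal link-paths in $\mathscr{P}_2(T)$. Hence $n<2k+(2k)\cdot\max_{P}|P|$, and the whole problem reduces to proving that every maximal link-path $P=u_1u_2\cdots u_q$ satisfies $q<2k^2$; this yields $n<4k^3$.

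To analyse a single link-path I would exploit the in-degree hypothesis: for each $i\ge 2$ the vertex $u_i$ has $p(u_i)=u_{i-1}$ as one in-neighbour (the tree arc) and, since $d^-(u_i)=2$, exactly one further in-neighbour $w_i$, with $(w_i,u_i)\in A(D)\setminus A(T)$. Because $u_i\notin L$ and $d^+_T(p(u_i))=d^+_T(u_{i-1})=1$, Lemma~\ref{char1ae} applies: part~(b) forbids $w_i$ from being a proper ancestor of $u_i$ (in particular there are no forward arcs inside $P$), and part~(a) forbids $w_i$ from being a non-leaf sibling. Since any two vertices of $T$ are in ancestor--descendant relation or are siblings, I conclude that each $w_i$ is either a leaf or a proper descendant of $u_i$.

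I would first dispose of a useful global fact: no vertex of $D$ has out-degree at least $k$, for otherwise its out-neighbours together with the vertex form a star out-tree with at least $k$ leaves. Consequently a fixed leaf can serve as $w_i$ for at most $k-1$ indices $i$, and as there are fewer than $k$ leaves, at most $(k-1)^2$ of the vertices of $P$ receive their extra arc from a leaf. It then remains to bound the number of indices $i$ for which $w_i$ is a non-leaf proper descendant of $u_i$; this is the heart of the argument. The key idea is that such an arc $(w_i,u_i)$ points ``up'' the tree, and the collection of all these up-arcs has in-degree at most $1$ at every vertex and is acyclic (it strictly decreases tree-depth), so it is a sub-out-forest $R$ of $D$. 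If some component of $R$ had $k$ leaves we would already be done; and a long directed path in $R$, say $a_0\to a_1\to\cdots\to a_t$ with each $a_{s}$ an ancestor of $a_{s-1}$, can be combined with the tree arcs descending from each $a_s$ toward $a_{s-1}$ to create a fresh leaf at (essentially) each $a_s$, producing an out-tree with about $t$ leaves; hence long $R$-chains are impossible too. Thus every component of $R$ is an out-tree with fewer than $k$ leaves and depth less than $k$, so it has fewer than $k^2$ vertices.

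The main obstacle is precisely this last step: translating ``every piece of $R$ is small'' into the bound $q<2k^2$ on the number of descendant-type indices along a single path $P$, since a priori $R$ could consist of many small components. I expect the resolution to require accounting for the up-arcs and the intervening tree segments together, and to be careful about the overlaps (an attached tree child must be a genuinely new vertex, not one already used in the out-tree)---essentially showing that a link-path longer than $2k^2$ must already host, within its induced up-arcs and tree arcs, either a component of $R$ with $k$ leaves or an $R$-chain of length $k$, and hence an out-tree with $k$ leaves, contradicting the hypothesis. Once the per-path bound $q<2k^2$ is in hand, combining it with $\lambda<k$, at most $k$ branch vertices, and at most $2k$ link-paths gives $n<4k^3$, as required.
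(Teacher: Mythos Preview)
Your setup matches the paper's: take a $1$-AE optimal out-branching, reduce via Facts~1 and~2 to bounding each maximal link-path $P$, and use Lemma~\ref{char1ae} to see that the non-tree in-neighbour $w_i$ of each $u_i$ ($i\ge 2$) is a leaf or a descendant. Your out-degree argument for the leaf case is fine. The gap you yourself flag is real and is exactly where the substance lies; however, your proposed resolution---that a long $P$ must force either a single large $R$-component or a long $R$-chain---is not how the paper closes it, and I do not see that route succeeding without a further idea, since $R$ can genuinely consist of many small components.

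The paper's key move is to bound the total number of leaves across \emph{all} components of the backward-arc forest at once, by gluing them into a single out-tree of $D$. Restrict to the forest of backward arcs on $P'=P-x$; its roots are precisely those vertices of $P'$ whose non-tree in-arc comes from outside $P'$, i.e.\ either from the subtree $T_z$ hanging below the last vertex of $P$ (call these roots $u_1,\dots,u_s$) or from a sink leaf elsewhere ($v_1,\dots,v_t$). Now attach all the backward-arc trees $T(u_i)$ onto $T_z$ via the arcs $(in(u_i),u_i)$: this is a single out-tree in $D$ with at least $\sum_i l(u_i)$ leaves, forcing $\sum_i l(u_i)\le k-1$. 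Likewise attach the $T(v_j)$ onto $T-V(T_x)$ to get $\sum_j l(v_j)\le k-1$. This simultaneously bounds the number of components ($s+t\le 2(k-1)$) and, combined with your chain-length bound (no backward path of length $\ge k$), bounds the total number of backward arcs by $2(k-1)^2$. Hence $|P|\le 1+2(k-1)+2(k-1)^2$, and the final count gives $n<4k^3$. That global gluing step is the missing idea.
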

\begin{proof} Let us assume that $D$ has no out-tree with $k$ leaves. Consider
an out-branching $T$ of $D$ with $p<k$ leaves  which is $1$-AE
optimal. Let $r$ be the root of $T$.

We will bound the number $n$ of vertices in $T$ as follows. Every
vertex of $T$ is either a leaf, or a branch vertex, or a link
vertex. By Facts~1 and 2 we already have bounds on the number of
leaf and branch vertices as well as the number of maximal paths
consisting of link vertices. So to get an upper bound on $n$ in
terms of $k$, it suffices to bound the length of each maximal path
consisting of link vertices. Let us consider such a  path $P$ and
let $x,y$ be the first and last vertices of $P$, respectively.

The vertices of $V(T)\setminus V(P)$ can be partitioned into four
classes as follows:
\begin{itemize}\item[$(a)$] {\sf ancestor vertices}: the vertices which appear
before $x$ on the $(r,x)$-path of $T$;
\item[$(b)$]
{\sf descendant vertices }: the vertices appearing after the
vertices of $P$ on paths of $T$ starting at $r$ and passing through
$y$;
\item[$(c)$]{\sf sink vertices}: the vertices which are leaves but not descendant
vertices;
\item[$(d)$]
{\sf special vertices}: none-of-the-above vertices.
\end{itemize}

Let $P'=P-x$, let $z$ be the out-neighbor of $y$ on $T$ and let
$T_z$ be the subtree of $T$ rooted at $z$. By Lemma~\ref{char1ae},
there are no arcs from special or ancestor vertices to the path
$P'$. Let $uv$ be an arc of $A(D)\setminus A(P')$ such that $v\in
V(P').$ There are two possibilities for $u$: (i) $u\not\in V(P')$,
(ii) $u\in V(P')$ and $uv$ is backward for $P'$ (there are no
forward arcs for $P'$ since $T$ is 1-AE optimal). Note that every
vertex of type (i) is  either a descendant vertex or a sink. Observe
also that the backward arcs for $P'$ form a vertex-disjoint
collection of out-trees with roots at vertices that are not terminal
vertices of backward arcs for $P'$. These roots are terminal
vertices of arcs in which first vertices are descendant vertices or
sinks.

We denote by $\{u_1,u_2,\ldots, u_s\}$ and $\{v_1,v_2,\ldots, v_t\}$
the sets of vertices on $P'$ which have in-neighbors that are
descendant vertices and sinks, respectively. Let the out-tree formed
by backward arcs for $P'$ rooted at $w\in
\{u_1,\ldots,u_s,v_1,\ldots,v_t\}$ be denoted by $T(w)$ and let
$l(w)$ denote the number of leaves in $T(w).$ Observe that the
following is an out-tree rooted at $z$:
$$T_z \cup \{(in(u_1),u_1), \ldots, (in(u_s),u_s)\} \cup \bigcup_{i=1}^{s}T(u_i),$$
where $\{in(u_1),\ldots,in(u_s)\}$ are the in-neighbors of
$\{u_1,\ldots,u_s\}$ on $T_z.$ This out-tree has at least
$\sum_{i=1}^{s}l(u_i)$ leaves and, thus, $\sum_{i=1}^{s}l(u_i)\le
k-1.$ Let us denote the subtree of $T$ rooted at $x$ by $T_x$ and
let $\{in(v_1),\ldots,in(v_t)\}$ be the in-neighbors of
$\{v_1,\ldots,v_t\}$ on $T-V(T_x)$. Then we have the following
out-tree:
$$(T-V(T_x)) \cup \{(in(v_1),v_1), \ldots, (in(v_t),v_t)\} \cup
\bigcup_{i=1}^{t}T(v_i)$$ with at least $\sum_{i=1}^{t}l(v_i)$
leaves. Thus, $\sum_{i=1}^{t}l(v_i)\le k-1.$

Consider a path $R=v_0v_1\ldots v_r$ formed by backward arcs.
Observe that the arcs $\{v_iv_{i+1}:\ 0\le i\le r-1\}\cup
\{v_jv^+_j:\ 1\le j\le r\}$ form an out-tree with $r$ leaves, where
$v^+_j$ is the out-neighbor of $v_j$ on $P.$ Thus, there is no path
of backward arcs of length more than $k-1$. Every out-tree $T(w)$,
$w\in \{u_1,\ldots,u_s \}$ has $l(w)$ leaves and, thus, its arcs can
be decomposed into $l(w)$ paths, each of length at most $k-1$. Now
we can bound the number of arcs in all the trees $T(w)$, $w\in
\{u_1,\ldots,u_s \}$, as follows: $ \sum_{i=1}^{s}l(u_i)(k-1) \leq
(k-1)^2.$ We can similarly bound the number of arcs in all the trees
$T(w)$, $w\in \{v_1,\ldots,v_s \}$ by $(k-1)^2$. Recall that the
vertices of $P'$ can be either terminal vertices of backward arcs
for $P'$ or vertices in $\{u_1,\ldots,u_s,v_1,\ldots,v_t\}$. Observe
that $s+t\le 2(k-1)$ since $\sum_{i=1}^{s}l(u_i)\le k-1$ and
$\sum_{i=1}^{t}l(v_i)\le k-1.$

Thus, the number of vertices in $P$ is bounded from above by
$1+2(k-1)+2(k-1)^2$. Therefore,
\begin{eqnarray*}
n &=& |L(T)| + |S^{+}_{\geq 2 }(T)|+ |S^{+}_{ 1}(T)|\\
  &=& |L(T)| + |S^{+}_{\geq 2 }(T)|+ \sum_{P \in \mathscr{P}_2(T)} |V(P)|\\
  & \leq & (k-1) + (k-2) + (2k-3) (2k^2-2k+1)\\
  & < & 4k^3.
\end{eqnarray*}
Thus, we conclude that $n\le 4k^3.$ \qed\end{proof}

\begin{theorem}\label{main1}
Let $D$ be a strongly connected digraph with $n$ vertices.
\begin{enumerate}
\item[(a)] If  $D$ is an oriented graph with minimum in-degree at
least 2, then $\ell_s(D)\ge (n/4)^{1/3}-1.$ \item[(b)] If $D$ is a
digraph with minimum in-degree at least 3, then $\ell_s(D)\ge
(n/4)^{1/3}-1.$
\end{enumerate}
\end{theorem}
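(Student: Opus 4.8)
The plan is to reduce Theorem~\ref{main1} to Lemma~\ref{comblemma}. The lemma already does the hard combinatorial work: it says that an oriented graph in which every vertex has in-degree exactly $2$ and which admits an out-branching must either contain an out-tree with $k$ leaves or satisfy $n \le 4k^3$. So the whole task is to manufacture, from the given strongly connected digraph $D$, an auxiliary oriented graph $D'$ to which the lemma applies, while making sure that an out-tree with many leaves in $D'$ yields an out-branching with (almost) as many leaves in $D$.

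For part (a), $D$ is already an oriented graph with minimum in-degree at least $2$. First I would pass to a spanning subdigraph $D'$ in which every vertex has in-degree \emph{exactly} $2$: for each vertex $v$ retain exactly two of its incoming arcs and discard the rest. I must check that $D'$ still has an out-branching; this is where I would invoke Proposition~\ref{iffoutb}. The subtlety is that deleting arcs can destroy strong connectivity, so the arc selection cannot be entirely arbitrary --- I would keep, for a fixed out-branching $B$ of $D$ (which exists since $D$ is strongly connected), the tree-arc into each non-root vertex plus one more incoming arc, guaranteeing that $D'$ contains $B$ and hence still has an out-branching. Now Lemma~\ref{comblemma} applies to $D'$. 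For part (b), $D$ may have $2$-cycles, but since the minimum in-degree is at least $3$, I would first extract an \emph{oriented} spanning subdigraph with minimum in-degree at least $2$: for each vertex of in-degree $\ge 3$ one can choose two incoming arcs avoiding creating a digon (at most one of the three chosen arcs can be the reverse of another retained arc, and with in-degree $3$ there is enough room to avoid both a digon and a loss of the out-branching), again keeping a fixed out-branching $B$ intact. This reduces part (b) to the situation of part (a).

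With $D'$ in hand and satisfying the hypotheses of Lemma~\ref{comblemma}, the final step is the contrapositive counting. Set $k = \ell_s(D)+1$; equivalently, let $k-1 = \ell_s(D) \ge \ell(D') \ge$ the number of leaves of the maximum out-tree in $D'$. Since $D'$ is a spanning subdigraph of $D$, any out-tree of $D'$ is an out-tree of $D$, so $\ell(D') \le \ell(D) = \ell_s(D)$ (the last equality is standard for strongly connected digraphs, as one can always extend an out-tree to a spanning out-branching without losing leaves). Thus $D'$ has no out-tree with $k = \ell_s(D)+1$ leaves, and Lemma~\ref{comblemma} yields $n \le 4k^3 = 4(\ell_s(D)+1)^3$. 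Rearranging gives $\ell_s(D) \ge (n/4)^{1/3} - 1$, which is exactly the claimed bound.

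The main obstacle I anticipate is \emph{not} the counting, which is essentially forced once the lemma is available, but the arc-deletion step: I must delete arcs down to in-degree exactly $2$ (and, in part (b), simultaneously kill all digons) while provably preserving the existence of an out-branching. Keeping a fixed out-branching $B$ entirely inside $D'$ is the clean way to settle this via Proposition~\ref{iffoutb}, and I would state that as the key lemma-free observation driving the construction. One has to double-check in part (b) that in-degree $3$ genuinely leaves enough freedom to retain the tree-arc of $B$, add a second incoming arc, and avoid creating any $2$-cycle --- this is the one place where the precise hypothesis ``in-degree at least $3$'' (rather than $2$) is actually used.
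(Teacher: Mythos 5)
Part (a) and your final counting step are fine and essentially identical to the paper's argument: retain a fixed out-branching, prune the remaining arcs down to in-degree exactly $2$, apply Lemma~\ref{comblemma} with $k=\ell_s(D)+1$, and use $\ell(D')\le\ell(D)=\ell_s(D)$ to conclude $n\le 4(\ell_s(D)+1)^3$.

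The gap is in part (b). Your claim that ``at most one of the three chosen arcs can be the reverse of another retained arc'' is false, because the local choices interact with the tree arcs of $B$ that you are forced to retain. Concretely, let $v$ have in-neighbors exactly $a,b,c$, where $a$ is the parent of $v$ in $B$ and $b,c$ are children of $v$ in $B$ that also send arcs back to $v$ (so $vb,\,bv,\,vc,\,cv\in A(D)$). You must retain $av$, $vb$ and $vc$ as tree arcs of $B$, and you must retain a second in-arc of $v$, which is either $bv$ or $cv$; either choice creates a digon with a retained tree arc. Such a configuration is perfectly compatible with strong connectivity and minimum in-degree $3$, so an oriented spanning subdigraph of $D$ that contains $B$ and has all in-degrees equal to $2$ need not exist, and your reduction of (b) to (a) breaks down. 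The paper circumvents exactly this point by \emph{not} making the digraph oriented: it deletes only the double arcs along the single link-vertex path $P$ analysed inside the proof of Lemma~\ref{comblemma} (each vertex of $P$ has at most one incoming double arc, namely from its successor on $P$, so in-degree $3$ leaves room to keep the tree arc plus a second, non-double, in-arc), and then argues that the \emph{proof} of Lemma~\ref{comblemma}, rather than its literal statement, still goes through for the resulting non-oriented digraph. To repair your write-up you need this weaker, proof-level reduction (or some other argument); the counterexample above shows that the global digon-free pruning you propose is not always available.
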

\begin{proof} Since $D$ is strongly connected, we have $\ell(D)=\ell_s(D)>0.$
Let $T$ be an 1-AE optimal out-branching of $D$ with maximum number
of leaves. (a) Delete some arcs from $A(D)\setminus A(T)$, if
needed, such that the in-degree of each vertex of $D$ becomes 2. Now
the inequality $\ell_s(D)\ge (n/4)^{1/3}-1$ follows from Lemma
\ref{comblemma} and the fact that $\ell(D)=\ell_s(D)$.

(b)  Let $P$ be the path formed in the proof of Lemma
\ref{comblemma}. (Note that $A(P)\subseteq A(T)$.) Delete every
double arc of $P$, in case there are any, and delete some more arcs
from $A(D)\setminus A(T)$, if needed, to ensure that the in-degree
of each vertex of $D$ becomes 2. It is not difficult to see that the
proof of Lemma \ref{comblemma} remains valid for the new digraph
$D$. Now the inequality $\ell_s(D)\ge (n/4)^{1/3}-1$ follows from
Lemma \ref{comblemma} and the fact that $\ell(D)=\ell_s(D)$. \qed
 \end{proof}

\begin{rem}\label{mainremark}
It is easy to see that Theorem \ref{main1} holds also for acyclic
digraphs $D$ with $\ell_s(D)>0$.
\end{rem}

While we do not know whether the bounds of Theorem \ref{main1} are
tight, we can show that no linear bounds are possible. The following
result is formulated for Part (b) of Theorem \ref{main1}, but a
similar result holds for Part (a) as well.

\begin{theorem}\label{exampleth}
For each $t\ge 6$ there is a strongly connected digraph $H_t$ of
order $n=t^2+1$ with minimum in-degree 3 such that
$0<\ell_s(H_t)=O(t).$
\end{theorem}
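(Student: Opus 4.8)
The plan is to exhibit $H_t$ explicitly and then to bound the number of leaves in \emph{every} out-branching by $O(t)$ through a ``long-jump'' reachability argument. I would lay the $n=t^2+1$ vertices on a line, writing them as $u_0,u_1,\dots,u_{t^2}$, and put down three kinds of arcs: a forward spine $u_k\to u_{k+1}$ for $0\le k\le t^2-1$; for each $k$ two \emph{backward} chords $u_k\to u_{k-t}$ and $u_k\to u_{k-2t}$; and a handful of wrap-around arcs incident to $u_0$ and $u_{t^2}$ to repair the degrees near the two ends and to close the structure up. A generic vertex $u_k$ then has in-neighbours $u_{k-1},u_{k+t},u_{k+2t}$ and out-neighbours $u_{k+1},u_{k-t},u_{k-2t}$, so the minimum in-degree is exactly $3$; the forward spine together with the long backward jumps makes $H_t$ strongly connected, which also guarantees $\ell_s(H_t)>0$ by Proposition~\ref{iffoutb}. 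The only care needed is at the two ends $0\le k<2t$ and $t^2-2t<k\le t^2$, where the wrap-around arcs must be chosen so that in-degree $3$ and strong connectivity are preserved; this is a finite, routine check.

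Fix any out-branching $T$ of $H_t$, rooted at an arbitrary vertex. Call a tree-arc a \emph{spine arc} if it is one of the $u_k\to u_{k+1}$, and a \emph{jump} otherwise (a backward chord or a wrap-around arc). Partition $V(H_t)$ into the maximal directed paths of spine arcs of $T$, which I call \emph{runs}; since every non-root vertex has a unique parent, the number of runs equals $1$ plus the number of jump arcs used by $T$, because the first vertex of every run other than the root's is entered by a jump. Inside a run every vertex except the last one has a spine-child, hence is a link or branch vertex; therefore each run contains at most one leaf, and
\[
|L(T)|\ \le\ 1+\#\{\text{jump arcs of }T\}.
\]
So it suffices to prove that any out-branching uses only $O(t)$ jump arcs.

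The heart of the matter is that every jump decreases the index by at least $t$ (up to the $O(t)$ wrap-around exceptions), while the index can only increase one step at a time, along spine arcs. Consequently, to reach a vertex of large index the tree must climb to it through a long forward run, and the runs are forced to tile the interval $\{0,1,\dots,t^2\}$ into blocks whose lengths are controlled by the jump size $t$; this should cap the number of runs, and hence the number of jump arcs, at $O(t)$. Making this counting rigorous \emph{for an arbitrary root} is exactly the main obstacle: because the in-degree-$3$ requirement equips every vertex with two backward in-chords, nothing local forbids a tree from taking many jumps, so the bound on the number of jumps must come from a global reachability/potential argument that tracks how the running maximum index can only be advanced by spine arcs. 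Once $\#\{\text{jump arcs}\}=O(t)$ is established, the displayed inequality gives $\ell_s(H_t)=O(t)$, and strong connectivity gives $\ell_s(H_t)>0$, completing the proof.
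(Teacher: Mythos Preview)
Your construction differs from the paper's, and the proof strategy you propose does not work. The reduction $|L(T)|\le 1+\#\{\text{jump arcs of }T\}$ is correct, but the assertion that every out-branching uses only $O(t)$ jump arcs is false, not merely unproved. In your digraph (whatever wrap-around arcs you add), root an out-branching at $u_r$ with, say, $r=\lfloor t/2\rfloor\cdot t$, and declare the parent of $u_j$ to be $u_{j-1}$ for $r< j\le t^2$ and $u_{j+t}$ for $0\le j\le r-1$. One checks easily that this is a valid out-branching, and it uses exactly $r=\Theta(t^2)$ jump arcs, so no $O(t)$ bound on the jump count exists. (This particular tree happens to have only $t+1$ leaves, namely $u_0,\dots,u_{t-1}$ and $u_{t^2}$; your inequality is simply very slack here.) The ``running maximum index'' heuristic you sketch bounds the number of jumps on a single root-to-leaf path, not the total number of jump arcs in the tree: once the tree branches, each branch can independently absorb many jumps. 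So the route via counting jump arcs is a dead end; whether your digraph actually satisfies $\ell_s=O(t)$ would require an entirely different argument, and you have not supplied one (nor have you specified the wrap-around arcs, on which the answer may depend).

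The paper avoids all of this by choosing a hub-and-spoke construction: one central vertex $r$ with $t$ pairwise disjoint ``arms'', arm $i$ consisting of a bidirected path $r\leftrightarrow u^i_1\leftrightarrow\cdots\leftrightarrow u^i_{t-2}$ together with back-arcs $u^i_j\to u^i_{j-2}$ and a complete digraph on the last four vertices of the arm to furnish the required in-degrees. Distinct arms meet only at $r$, so in any out-branching the leaves are distributed among the arms; and inside a single arm every vertex has all its out-neighbours within distance two along the arm (apart from the constant-size clique at the tip), which forces any out-tree within one arm to have $O(1)$ leaves. Summing over the $t$ arms gives $\ell_s(H_t)=O(t)$ immediately, with no global counting argument needed.
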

\begin{proof} Let $V(H_t)=\{r\}\cup \{u^i_1,u^i_2,\ldots ,u^i_{t}~|\ i\in
[t]\}$ and
\begin{eqnarray*}
A(H_t)& = & \left\{u^i_ju^i_{j+1},u^i_{j+1}u^i_j~|~\ i\in [t], j\in
\{0,1,\ldots ,t-3\}\right\} \\
& & \bigcup \left\{u^i_ju^i_{j-2}~|~ i\in [t], j\in
\{3,4,\ldots ,t-2\} \right \} \\
& & \bigcup \left\{u^i_ju^i_q ~|~\ i\in [t], t-3\le j\neq q\le
t\right\},
\end{eqnarray*}
where $u^i_0=r$ for every $i\in [t].$ It is easy to check that
$0<\ell_s(H_t)=O(t).$\qed
\end{proof}

%

\section{Pathwidth of underlying graphs and parameterized
algorithms}\label{sec:pathewidth_of_underlying_graphs}

By Proposition \ref{iffoutb}, an acyclic digraph $D$ has an
out-branching if and only if $D$ possesses a single vertex of
in-degree zero.

\begin{theorem}\label{dagth} Let $D$ be an acyclic
digraph with a single vertex of in-degree zero. Then either
$\ell_s(D)\ge k$ or the underlying undirected graph of $D$ is of
pathwidth at most $4k$ and we can obtain this path decomposition in
polynomial time.
\end{theorem}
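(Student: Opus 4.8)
The plan is to dichotomize on $\ell_s(D)$. If $\ell_s(D)\ge k$ we are in the first alternative, so I would assume $\ell_s(D)\le k-1$ and build a path decomposition of $\mathrm{UG}(D)$ of width at most $4k$. First I would fix a $1$-AE optimal out-branching $T$ of $D$; such a $T$ exists because $D$ has an out-branching (Proposition \ref{iffoutb}) and is computable in polynomial time, and since $\ell_s(D)\le k-1$ it has $p\le k-1$ leaves, so Facts~1 and~2 apply. Set $B$ to be the union of the leaves $L(T)$, the branch vertices $S^{+}_{\ge 2}(T)$, and the first vertex of each maximal link path in $\mathscr{P}_2(T)$. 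By Facts~1 and~2 this gives $|B|\le (k-1)+(k-2)+(2k-3)=4k-6$.

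The crux is to show that $\mathrm{UG}(D)-B$ is a disjoint union of paths, hence has pathwidth at most $1$. Every vertex outside $B$ is an internal vertex $w_i$ (with $i\ge 2$) of some maximal link path $w_1w_2\cdots w_m$, and such a vertex satisfies $d^{+}_T(p(w_i))=1$ since $p(w_i)=w_{i-1}$ is itself a link vertex. I would argue that the only edges of $\mathrm{UG}(D)$ joining two such internal link vertices $a,b$ are tree edges between consecutive vertices of a common link path. Orienting the edge as an arc $(a,b)$ of $D$ whose head $b$ is an internal link vertex, there are three cases according to the position of $a$ relative to $b$ in $T$: if $a$ is an ancestor of $b$, then $(a,b)$ is either the tree arc $(p(b),b)$ or is ruled out by Lemma \ref{char1ae}(b); if $a$ is a descendant of $b$, then $(a,b)$ would close a directed cycle, contradicting acyclicity; and if $a$ and $b$ are siblings, then, as both are non-leaves and $d^{+}_T(p(b))=1$, Lemma \ref{char1ae}(a) forbids the arc. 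Consequently the internal vertices of each link path induce exactly the path $w_2w_3\cdots w_m$, distinct link paths contribute pairwise non-adjacent paths, and $\mathrm{UG}(D)-B$ is a disjoint union of paths.

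The delicate point, and the step I expect to be the main obstacle, is the sibling case: Lemma \ref{char1ae}(a) applies only when the tail is a non-leaf, so an arc from a leaf into an internal link vertex is \emph{not} forbidden. This is precisely why the leaves are placed in $B$: any such arc is deleted together with its leaf endpoint and therefore does not survive in $\mathrm{UG}(D)-B$. I would be careful to treat each maximal link path uniformly (including degenerate paths with $m=1$, which contribute no internal vertices) and to check that the two ends $w_1$ and the successor of $w_m$ both lie in $B$, so that the internal vertices attach to the rest of the graph only through $B$.

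Given the structural claim I would finish with the standard fact that inserting the set $B$ into every bag of a path decomposition of $\mathrm{UG}(D)-B$ yields a path decomposition of $\mathrm{UG}(D)$, whence
\[
pw(\mathrm{UG}(D))\le pw(\mathrm{UG}(D)-B)+|B|\le 1+(4k-6)<4k.
\]
All steps, namely computing the $1$-AE optimal $T$, reading off $B$, taking the trivial width-$1$ decompositions of the disjoint paths of $\mathrm{UG}(D)-B$, and inflating the bags by $B$, run in polynomial time, so the path decomposition is produced constructively.
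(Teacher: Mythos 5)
Your proposal is correct and follows essentially the same route as the paper: both delete the set $L(T)\cup S^{+}_{\ge 2}(T)\cup F(T)$ from a $1$-AE optimal out-branching, show that what remains induces a disjoint union of directed paths (pathwidth $1$), and then add the deleted set to every bag to get width below $4k$. The only cosmetic difference is that you organize the ``no extra arcs'' argument as an ancestor/descendant/sibling case analysis applied directly to Lemma \ref{char1ae}, whereas the paper phrases it via forward/backward/cross-path arcs by reference to the proof of Lemma \ref{comblemma}; the substance is identical.
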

\begin{proof}
Assume that $\ell_s(D)\le k-1$. Consider a $1$-AE optimal
out-branching $T$ of $D$. Notice that $|L(T)|\le k-1.$ Now remove
all the leaves and branch vertices from the tree $T$. The remaining
vertices form maximal directed paths consisting of link vertices.
Delete the first vertices of all paths. As a result we obtain a
collection $\cal Q$ of directed paths. Let $H=\cup_{P\in {\cal
Q}}P$. We will show that every arc $uv$ with $u,v\in V(H)$ is in
$H.$

Let $P'\in \cal Q$. As in the proof of Lemma \ref{comblemma}, we see
that there are no forward arcs for $P'$. Since $D$ is acyclic, there
are no backward arcs for $P'.$ Suppose $uv$ is an arc of $D$ such
that $u\in R'$ and $v\in P'$, where $R'$ and $P'$ are distinct paths
from $\cal Q$. As in the proof of Lemma \ref{comblemma}, we see that
$u$ is either a sink or a descendent vertex for $P'$ in $T$. Since
$R'$ contains no sinks of $T$, $u$ is a descendent vertex, which is
impossible as $D$ is acyclic. Thus, we have proved that $pw({\rm
UG}(H))=1.$

Consider a path decomposition of $H$ of width 1. We can obtain a
path decomposition of ${\rm UG}(D)$ by adding all the vertices of
$L(T)\cup S^{+}_{\geq 2 }(T)\cup F(T)$, where $F(T)$ is the set of
first vertices of maximal directed paths consisting of link vertices
of $T$, to each of the bags of a path decomposition of $H$ of width
1. Observe that the pathwidth of this decomposition is  bounded from
above by
$$|L(T)| + |S^{+}_{\geq 2 }(T)| + |F(T)|+1
\leq (k-1) + (k-2) + (2k-3)+1 \leq 4k -5.$$ The bounds on the
various sets in the inequality above follows from Facts $1$ and $2$.
This proves the theorem.
 \qed\end{proof}

\begin{corollary}\label{cor:acyclic}
For acyclic digraphs, the problem $k$-DMLOB can solved in time
$2^{O(k\log k)}\cdot n^{O(1)}$.
\end{corollary}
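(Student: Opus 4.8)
The plan is to combine the combinatorial bound of Theorem~\ref{dagth} with a standard dynamic programming routine over a path decomposition. The theorem gives a clean dichotomy: either $D$ already has an out-branching with at least $k$ leaves (in which case the decision problem answers ``yes''), or $pw(\mathrm{UG}(D))\le 4k$, and moreover the path decomposition is produced in polynomial time. So the algorithm first computes a $1$-AE optimal out-branching $T$ (polynomial time, as noted after the definition of $\ell$-AE optimality), counts its leaves, and if $|L(T)|\ge k$ answers ``yes'' immediately. Otherwise we are in the second case and possess a path decomposition of $\mathrm{UG}(D)$ of width at most $4k-5=O(k)$.

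The heart of the argument is then the dynamic programming step. Having a path (equivalently tree) decomposition of width $w=O(k)$, I would run the well-known DP for counting the maximum number of leaves in a spanning out-branching. Processing the bags left to right, the DP state records, for the vertices currently in the bag, enough information to reconstruct partial solutions: for each vertex in the bag we track whether it has already been assigned a parent, whether it is currently a leaf, a link, or a branch vertex, and the partition of bag-vertices into the connected ``partial out-trees'' built so far (to forbid creating cycles and to ensure a single root, using Proposition~\ref{iffoutb} as the feasibility criterion for an out-branching). The number of such states per bag is $2^{O(w\log w)}$ because the connectivity-partition component contributes a factor that is a Bell-number-type quantity in $w$, i.e.\ $w^{O(w)}=2^{O(w\log w)}$. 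Since $w=O(k)$, this is $2^{O(k\log k)}$, and the transition between consecutive bags costs a further polynomial factor, giving total running time $2^{O(k\log k)}\cdot n^{O(1)}$.

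The step I expect to be the main obstacle is bookkeeping the ``single root / no cycle'' constraint inside the DP so that it genuinely counts \emph{out-branchings} rather than arbitrary out-forests, while keeping the state space at $2^{O(w\log w)}$. The subtlety is that the orientation matters: each non-root vertex needs exactly one in-arc, and we must verify globally that the partial trees merge into one spanning out-tree rooted at the unique source. This is handled by the connectivity-partition bookkeeping (a Cartesian-tree / matching-based encoding of which bag-vertices lie in the same partial component and which component already contains the root), exactly as in the classical treewidth-based algorithms for spanning-tree problems; it introduces the $\log w$ factor in the exponent but no more.

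\begin{proof}
Given $D$, first compute in polynomial time a $1$-AE optimal out-branching $T$ and the number $|L(T)|$ of its leaves. If $|L(T)|\ge k$, answer ``yes''. Otherwise, by Theorem~\ref{dagth} we obtain, in polynomial time, a path decomposition of $\mathrm{UG}(D)$ of width $w\le 4k-5=O(k)$. On a graph of pathwidth $w$ one solves $k$-DMLOB by dynamic programming over the bags: a state stores, for each vertex of the current bag, its role (assigned/unassigned parent; current leaf/link/branch status) together with the partition of the bag into partial out-trees and a marker for the component containing the root, so that feasibility of completing the partial solution to an out-branching is governed by Proposition~\ref{iffoutb}. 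The number of states per bag is $w^{O(w)}=2^{O(w\log w)}$, and transitions between adjacent bags take $2^{O(w\log w)}\cdot n^{O(1)}$ time. Since $w=O(k)$, the total running time is $2^{O(k\log k)}\cdot n^{O(1)}$, and the DP reports whether $\ell_s(D)\ge k$.
\qed
\end{proof}
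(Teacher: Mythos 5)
Your proposal is correct and follows essentially the same route as the paper: invoke the (algorithmic form of the) dichotomy in Theorem~\ref{dagth} to either find an out-branching with $k$ leaves or obtain a width-$O(k)$ path decomposition, then run a standard connectivity-type dynamic programming over the decomposition with $2^{O(k\log k)}$ states per bag. The paper simply delegates the DP to a citation (Arnborg--Proskurowski) where you spell out the state space, but the argument is the same.
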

\begin{proof}
The proof of Theorem~\ref{dagth} can be easily turned into a
polynomial time algorithm to either build an out-branching of $D$
with at least $k$ leaves or to show that $pw({\rm UG}(D))\le 4k$ and
provide the corresponding path decomposition. A standard  dynamic
programming over the path (tree) decomposition (see e.g.
\cite{ArnborgP89-Li}) gives us an algorithm of running time
$2^{O(k\log k)}\cdot n^{O(1)}$.\qed
 \end{proof}

The following simple lemma is well known, see, e.g.,
\cite{chung1990}.

\begin{lemma}
\label{treesep} Let $T=(V,E)$ be an undirected tree and let $w ~:~V
\rightarrow \mathbb{R}^{+} \cup \{0\}$ be a weight function on its
vertices. There exists a vertex $v\in T$ such that the weight of
every subtree $T'$ of $T-v$ is at most $w(T)/2$, where $w(T)=\sum_{v
\in V} w(v)$.
\end{lemma}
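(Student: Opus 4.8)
The plan is to prove this classical tree-separator lemma (Lemma~\ref{treesep}) by a standard ``move toward the heavy side'' argument, locating a balanced separator vertex through an iterative walk in the tree. First I would fix notation: for a tree $T=(V,E)$ with nonnegative vertex weights $w$, write $w(T)=\sum_{v\in V}w(v)$, and for any vertex $v$ let the components of $T-v$ be its neighboring subtrees. I want to exhibit a single vertex $v$ such that each component of $T-v$ has weight at most $w(T)/2$.

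The approach I would take is a potential/descent argument. Starting from an arbitrary vertex $v_0$, if $v_0$ already works we are done. Otherwise there is a unique subtree $T'$ of $T-v_0$ with $w(T')>w(T)/2$; crucially this subtree is unique, since two disjoint subtrees cannot both exceed half the total weight. Let $v_1$ be the neighbor of $v_0$ lying in that heavy subtree, and repeat the process from $v_1$. The key observation driving the argument is monotonicity: when we step from $v_0$ to $v_1$, the subtree of $T-v_1$ that contains $v_0$ consists of $v_0$ together with all the components of $T-v_0$ other than the heavy one $T'$, so its weight is $w(T)-w(T')<w(T)/2$. Hence we never need to step back toward $v_0$ again. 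This means the walk $v_0,v_1,v_2,\dots$ never revisits an edge in the direction it came from, so in a finite tree it cannot cycle and must terminate. Termination can only occur at a vertex where no adjacent subtree exceeds $w(T)/2$, which is exactly the separator we seek.

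To make the termination rigorous I would argue that the sequence of vertices is strictly progressing into a shrinking region. At each step the ``heavy side'' we move into is a strict subtree of the previous heavy side (it excludes the vertex we just left and everything hanging off it on the other side), so the heavy subtrees form a strictly decreasing nested sequence of vertex sets. Since $V$ is finite this chain has finite length, and the process must stop. Alternatively, one can phrase this as: assign to each vertex the maximum weight of a component of $T-v$, and observe that this quantity strictly decreases along the walk, so the walk halts at a local (hence, by the uniqueness of the heavy subtree, global in the required sense) minimizer.

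I expect the main obstacle, though a mild one, to be pinning down the uniqueness of the heavy subtree and the precise monotonicity bookkeeping when several components of $T-v_0$ are present. The clean statement is: because the components of $T-v_0$ partition $V\setminus\{v_0\}$ into disjoint pieces, their weights sum to at most $w(T)$, so at most one can exceed $w(T)/2$; and once we move into that unique heavy component, the complementary side (now containing $v_0$) has weight strictly below $w(T)/2$ and stays bounded away, guaranteeing we never backtrack. Everything else is routine, and the degenerate cases (a single vertex, or all weight concentrated at one vertex) are handled directly since then the starting vertex or that heavy vertex itself already satisfies the conclusion.
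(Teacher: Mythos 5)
Your proof is correct. Note that the paper itself does not prove Lemma~\ref{treesep} at all --- it is stated as ``well known'' with a citation to \cite{chung1990} --- so there is no in-paper argument to compare against; what you have supplied is the standard elementary proof, and it is sound. The two load-bearing points are exactly the ones you identify: (i) at most one component of $T-v$ can have weight exceeding $w(T)/2$, since the components are disjoint and their weights sum to at most $w(T)$; and (ii) after stepping from $v_0$ into the unique heavy component $C$, the component of $T-v_1$ containing $v_0$ is precisely $V\setminus C$ and so has weight $w(T)-w(C)<w(T)/2$, while every other component of $T-v_1$ is a proper subset of $C$. The strictly nested chain of heavy components then forces termination, and a terminal vertex is the desired separator. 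One small caution: your ``alternative'' phrasing via the potential $\Phi(v)=\max$ component weight of $T-v$ does not quite work as stated when zero weights are allowed (e.g., on a path with all weight on the last vertex, $\Phi$ stays constant for several steps), so the strict decrease fails; but this is only an aside, and your primary argument --- strict shrinkage of the heavy component as a vertex set --- is airtight and does not need it. The degenerate cases you mention are indeed handled by the same scheme.
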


Let $D$ be a strongly connected digraph with $\ell_s(D)=\lambda$ and
let $T$ be an out-branching of $D$ with $\lambda$ leaves. Consider
the following decomposition of $T$ (called a $\beta$-{\em
decomposition}) which will be useful in the proof of
Theorem~\ref{mainth}.

Assign weight 1 to all leaves of $T$ and weight 0 to all non-leaves
of $T$. By Lemma \ref{treesep}, $T$ has a vertex $v$ such that each
component of $T-v$ has at most $\lambda/2+1$ leaves (if $v$ is not
the root and its in-neighbor $v^-$ in $T$ is a link vertex, then
$v^-$ becomes a new leaf). Let $T_1,T_2,\ldots , T_s$ be the
components of $T-v$ and let $l_1,l_2,\ldots ,l_s$ be the numbers of
leaves in the components. Notice that $\lambda\le \sum_{i=1}^sl_i\le
\lambda+1$ (we may get a new leaf). We may assume that $l_s\le
l_{s-1}\le \cdots \le l_1\le \lambda/2+1.$ Let $j$ be the first
index such that $\sum_{i=1}^{j}l_i \geq \frac{\lambda}{2}+1.$
Consider two cases: (a) $l_{j}\le (\lambda+2)/4$ and (b) $l_{j}>
(\lambda+2)/4$. In Case (a), we have
$$ \frac{\lambda+2}{2}\leq \sum_{i=1}^{j}l_i \leq \frac{3(\lambda+2)}{4} \mbox{ and
}\frac{\lambda-6}{4} \le \sum_{i=j+1}^{s}l_i \le
\frac{\lambda}{2}.$$ In Case (b), we have $j=2$ and
$$ \frac{\lambda+2}{4}\le l_1\le  \frac{\lambda+2}{2} \mbox{ and }\frac{\lambda-2}{2}
\leq \sum_{i=2}^{s}l_i \le \frac{3\lambda+2}{4}.$$

Let $p=j$ in Case (a) and $p=1$ in Case (b). Add to $D$ and $T$ a
{\em copy} $v'$ of $v$ (with the same in- and out-neighbors). Then
the number of leaves in each of the out-trees
$$T'=T[\{v\}\cup (\cup_{i=1}^pV(T_i))] \mbox{ and } T''=T[\{v'\}\cup
(\cup_{i=p+1}^sV(T_i))]$$ is between $\lambda(1+o(1))/4$ and
$3\lambda(1+o(1))/4$. Observe that the vertices of $T'$ have at most
$\lambda+1$ out-neighbors in $T''$ and the vertices of $T''$ have at
most $\lambda+1$ out-neighbors in $T'$ (we add 1 to $\lambda$ due to
the fact that $v$ `belongs' to both $T'$ and $T''$).

Similarly to deriving $T'$ and $T''$ from $T$, we can obtain two
out-trees from $T'$ and two out-trees from $T''$ in which the
numbers of leaves are approximately between a quarter and three
quarters of the number of leaves in $T'$ and $T''$, respectively.
Observe that after $O(\log \lambda)$ `dividing' steps, we will end
up with $O(\lambda)$ out-trees with just one leaf, i.e., directed
paths. These paths contain $O(\lambda)$ copies of vertices of $D$
(such as $v'$ above). After deleting the copies, we obtain a
collection of $O(\lambda)$ disjoint directed paths covering $V(D)$.

\begin{theorem}\label{mainth} Let $D$ be a strongly connected digraph.
Then either $\ell_s(D)\ge k$ or the underlying undirected graph of
$D$ is of pathwidth $O(k \log k)$.
\end{theorem}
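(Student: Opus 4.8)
The plan is to assume $\ell_s(D)\le k-1$ (otherwise we are done) and to build, from the $\beta$-decomposition set up above, a path decomposition of $\mathrm{UG}(D)$ of width $O(k\log k)$; the theorem then follows at once (or, if one prefers to argue with orderings, via Proposition~\ref{sovp_pw_vs}). First I would fix a $1$-AE optimal out-branching $T$ with $\lambda=\ell_s(D)\le k-1$ leaves and run the $\beta$-decomposition recursively. Since at each step both parts $T'$ and $T''$ receive between roughly a quarter and three quarters of the current leaves, the leaf count shrinks by a factor of at most $3/4$ per level; hence the associated binary decomposition tree $\mathcal{B}$ has depth $O(\log\lambda)=O(\log k)$, its leaves are $O(\lambda)=O(k)$ vertex-disjoint directed paths covering $V(D)$ (after deleting the copies), and at every internal node the two sides have at most $\lambda+1$ out-neighbours in each direction, i.e.\ an interface of $O(k)$ vertices.

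I would then construct the path decomposition bottom-up along $\mathcal{B}$. At a leaf of $\mathcal{B}$ we have a single directed path $P$, and I claim that the natural ordering of $P$ gives $\mathrm{UG}(D[V(P)])$ a path decomposition of width $O(k)$: by $1$-AE optimality there are no forward arcs for $P$ (Lemma~\ref{char1ae}), and, exactly as in the proof of Lemma~\ref{comblemma}, the backward arcs for $P$ form a vertex-disjoint collection of out-trees whose total number of leaves is $O(k)$. For any cut of $P$, a vertex lying to the left whose backward-arc parent lies to the right is the root of a subtree contained entirely on the left side of the cut; these subtrees are disjoint and each carries at least one leaf, so their number is at most the total number of leaves of the backward out-trees, namely $O(k)$, and together with the single path edge at the cut this bounds the vertex separation of $P$ by $O(k)$. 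At an internal node I concatenate the two path decompositions already built for its children and insert the (at most $2(\lambda+1)=O(k)$) interface vertices of that node into every bag; this places both endpoints of every arc between $T'$ and $T''$ in a common bag, and it preserves contiguity because an inserted vertex now occupies the whole current range.

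The width then obeys a recursion of the form $W(\lambda^{*})\le W(\le 3\lambda^{*}/4)+O(k)$ with base value $O(k)$, so after the $O(\log k)$ levels of $\mathcal{B}$ each bag carries its $O(k)$ leaf content together with an $O(k)$ interface contribution from each of its $O(\log k)$ ancestors in $\mathcal{B}$, for a total width of $O(k\log k)$; hence $pw(\mathrm{UG}(D))=O(k\log k)$. I expect the genuine difficulty to lie in the base case rather than in the recursion: one must check that the single directed paths output by the $\beta$-decomposition still inherit the restricted chord structure of Lemma~\ref{comblemma} (no forward arcs, and backward arcs forming vertex-disjoint out-trees with only $O(k)$ leaves in total), so that each leaf-level cut meets only $O(k)$ of them, and one must verify that inserting the interface vertices simultaneously repairs all cross arcs while keeping a legitimate path decomposition at every level of $\mathcal{B}$.
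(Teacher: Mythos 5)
Your overall architecture is exactly that of the paper: run the $\beta$-decomposition to get a recursion tree of depth $O(\log k)$ whose leaves are $O(k)$ disjoint directed paths, build path decompositions bottom-up, and at each internal node concatenate the two children's decompositions and insert the $O(k)$ interface vertices into every bag, for a total width of $O(k)\cdot O(\log k)$. That part of your argument matches the paper's inductive proof of the bound $pw({\rm UG}(D[V(Q)]))<2(t-j+2.5)k$ and is fine (modulo the small point that the duplicated split vertex $v$ must itself be placed in the interface set so that its occurrences remain contiguous after the copies are re-identified; the paper does this implicitly via the ``$+1$'' in $\lambda+1$).

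The genuine gap is in your base case, and it is not merely a verification left to the reader: the route you propose does not go through. You bound the vertex separation of a decomposition path $P$ by the number of leaves of the forest of out-trees formed by the backward arcs, importing that structure ``exactly as in the proof of Lemma~\ref{comblemma}.'' But that forest structure is a consequence of the hypothesis of Lemma~\ref{comblemma} that \emph{every vertex has in-degree exactly $2$} (one in-arc comes from the path, so each path vertex receives at most one backward arc); in Theorem~\ref{mainth} the digraph $D$ is an arbitrary strongly connected digraph, a vertex of $P$ may have many backward in-neighbors, and the backward arcs need not form vertex-disjoint out-trees at all. Your cut-counting step, which assigns to each left-of-the-cut vertex its unique ``backward-arc parent,'' collapses with it. Likewise, the bound $O(k)$ on the total number of leaves of those out-trees was derived in Lemma~\ref{comblemma} from $P'$ being a maximal path of link vertices of $T$ (via the auxiliary out-trees built from $T_z$ and $T-V(T_x)$), whereas the paths produced by the $\beta$-decomposition pass through branch vertices and leaves of $T$. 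Finally, ``no forward arcs by Lemma~\ref{char1ae}'' is false as stated: parts (a) and (b) of that lemma only exclude arcs whose head is a non-leaf whose parent is a link vertex, so forward arcs into leaves, branch vertices, or first vertices of link paths are possible. The paper repairs both problems at once: it sets $W=(L(T)\cup S^+_{\ge 2}(T)\cup F(T))\cap V(Q)$ with $|W|<4k$, deletes all non-$Q$ arcs meeting $W$ and adds $W$ to every bag, and then bounds the cut \emph{directly} --- if some prefix of $Q$ contained $k$ vertices with in-neighbors in the suffix, the suffix path plus one such backward arc per vertex would be an out-tree with $k$ leaves, contradicting $\ell_s(D)=\ell(D)<k$. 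That argument needs no degree hypothesis and no forest structure, and it is what you should substitute for your leaf-counting step.
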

\begin{proof}
We may assume that $\ell_s(D)<k$. Let $T$ be be a 1-AE optimal
out-branching. Consider a $\beta$-decomposition of $T$. The
decomposition process can be viewed as a tree $\cal T$ rooted in a
node (associated with) $T$. The children of $T$ in $\cal T$ are
nodes (associated with) $T'$ and $T''$; the leaves of $\cal T$ are
the directed paths of the decomposition. The {\em first layer} of
$\cal T$ is the node $T$, the {\em second layer} are $T'$ and $T''$,
the {\em third layer} are the children of $T'$ and $T''$, etc. In
what follows, we do not distinguish between a node $Q$ of $\cal T$
and the tree associated with the node. Assume that $\cal T$ has $t$
layers. Notice that the last layer consists of (some) leaves of
$\cal T$ and that $t=O(\log k)$, which was proved above ($k\le
\lambda - 1$).

Let $Q$ be a node of $\cal T$ at layer $j$. We will prove that
\begin{equation}\label{ineq1}pw({\rm
UG}(D[V(Q)]))< 2(t-j+2.5)k\end{equation} Since $t=O(\log k)$,
(\ref{ineq1}) for $j=1$ implies that  the underlying undirected
graph of $D$ is of pathwidth $O(k \log k)$.

We first prove (\ref{ineq1}) for $j=t$ when $Q$ is a path from the
decomposition. Let $W=(L(T)\cup S^+_{\ge 2}(T)\cup F(T))\cap V(Q),$
where $F(T)$ is the set of first vertices of maximal paths of $T$
consisting of link vertices. As in the proof of Theorem \ref{dagth},
it follows from Facts 1 and 2 that $|W|<4k.$ Obtain a digraph $R$ by
deleting from $D[V(Q)]$ all arcs  in which at least one end-vertex
is in $W$ and which are not arcs of $Q$. As in the proof of Theorem
\ref{dagth}, it follows from Lemma \ref{char1ae} and 1-AE optimality
of $T$ that there are no forward arcs for $Q$ in $R$. Let $Q=v_1v_2
\dots v_q$. For every $j\in [q]$, let
 $V_j = \{v_i:\ i\in [j]\}$. If for some $j$ the set $V_j$
contained $k$ vertices, say $\{v_1',v_2',\cdots ,v_k'\}$, having
in-neighbors in the set $\{v_{j+1},v_{j+2}, \dots, v_q \}$, then $D$
would contain an out-tree with $k$ leaves formed by the path
$v_{j+1}v_{j+2} \dots v_q$ together with a backward arc terminating
at $v_i'$ from a vertex on the path for each $1\leq i \leq k$, a
contradiction. Thus $vs({\rm UG}(D_2[P]))\leq k.$ By
Proposition~\ref{sovp_pw_vs}, the pathwidth of ${\rm UG}(R)$ is at
most $k$. Let $(X_1, X_2, \ldots, X_s)$ be a path decomposition of
${\rm UG}(R)$ of width at most $k$. Then $(X_1\cup W, X_2\cup W,
\ldots, X_s\cup W)$ is a path decomposition of ${\rm UG}(D[V(Q)])$
of width less than $k+4k.$ Thus,
\begin{equation}\label{ineq2}pw({\rm
UG}(D[V(Q)]))<5k\end{equation}

Now assume that we have proved (\ref{ineq1}) for $j=i$ and show it
for $j=i-1$. Let $Q$ be a node of layer $i-1$. If $Q$ is a leaf of
$\cal T$, we are done by (\ref{ineq2}). So, we may assume that $Q$
has children $Q'$ and $Q''$ which are nodes of layer $i.$ In the
$\beta$-decomposition of $T$ given before this theorem, we saw that
the vertices of $T'$ have at most $\lambda+1$ out-neighbors in $T''$
and the vertices of $T''$ have at most $\lambda+1$ out-neighbors in
$T'$. Similarly, we can see that (in the $\beta$-decomposition of
this proof) the vertices of $Q'$ have at most $k$ out-neighbors in
$Q''$ and the vertices of $Q''$ have at most $k$ out-neighbors in
$Q'$ (since $k\le \lambda - 1$). Let $Y$ denote the set of the
above-mentioned out-neighbors on $Q'$ and $Q''$; $|Y|\le 2k.$ Delete
from $D[V(Q')\cup V(Q'')]$ all arcs in which at least one end-vertex
is in $Y$ and which do not belong to $Q'\cup Q''$

Let $G$ denote the obtained digraph. Observe that $G$ is
disconnected and $G[V(Q')]$ and $G[V(Q'')]$ are components of $G$.
Thus, $pw({\rm UG}(G))\le b$, where
\begin{equation}\label{ineq3}b=\max\{pw({\rm
UG}(G[V(Q')])),pw({\rm UG}(G[V(Q'')]))\}< 2(t-i+4.5)k\end{equation}
Let $(Z_1,Z_2,\ldots ,Z_r)$ be a path decomposition of $G$ of width
at most $b.$ Then $(Z_1\cup Y,Z_2\cup Y,\ldots ,Z_r\cup Y)$ is a
path decomposition of ${\rm UG}(D[V(Q')\cup V(Q'')]$) of width at
most $b+2k<2(t-i+2.5)k.$ \qed
 \end{proof}
Similar to the proof of Corollary~\ref{cor:acyclic}, we obtain the
following:
\begin{corollary}\label{cor:mainresult}
For a strongly connected digraph $D$, the problem $k$-DMLOB can be
solved in time $2^{O(k\log ^2k)}\cdot n^{O(1)}$.
\end{corollary}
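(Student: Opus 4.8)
The plan is to imitate the proof of Corollary~\ref{cor:acyclic}, using Theorem~\ref{mainth} in place of Theorem~\ref{dagth}. I would design a win/win algorithm that, in polynomial time, either exhibits an out-branching of $D$ with at least $k$ leaves (and reports \emph{yes}), or produces an explicit path decomposition of ${\rm UG}(D)$ of width $O(k\log k)$. In the second case a standard dynamic program over the decomposition computes $\ell_s(D)$ exactly, and the answer is read off. Correctness is immediate: if the first branch succeeds then $\ell_s(D)\ge k$, and otherwise the dynamic program is exact.

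To make Theorem~\ref{mainth} constructive, I would first compute a $1$-AE optimal out-branching $T$; this takes polynomial time, since only single-arc exchanges are tried. If $|L(T)|\ge k$ we stop and report \emph{yes}. Otherwise I would carry out the $\beta$-decomposition recursively exactly as in the proof of Theorem~\ref{mainth}: at each node Lemma~\ref{treesep} locates the splitting vertex (a polynomial computation on the leaf-weighted tree), and the node is split into its two children $Q'$ and $Q''$. The recursion tree $\cal T$ has depth $t=O(\log k)$. At its leaves, which are directed paths, I build the path decompositions of width less than $5k$ guaranteed by~(\ref{ineq2}); then, proceeding bottom-up, at each internal node I add the cross out-neighbor set $Y$ (of size at most $2k$) to every bag of the two children's decompositions, as in~(\ref{ineq3}). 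By the telescoping estimate~(\ref{ineq1}), specialised to $j=1$, the resulting decomposition of ${\rm UG}(D)$ has width $2(t+1.5)k=O(k\log k)$, and the whole construction is polynomial in $n$.

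It remains to run the dynamic program. Over a path decomposition of width $w=O(k\log k)$ I would use the classical dynamic program for maximum-leaf out-branching (see, e.g., \cite{ArnborgP89-Li}), sweeping the bags from left to right. A state records how a partial oriented forest meets the current bag: a partition of the bag vertices according to the component of the forest they lie in (this connectivity bookkeeping is what forces the final object to be a single spanning, acyclic out-tree), together with $O(w)$ further bits recording, for each bag vertex, its current out-degree type (leaf, link, or branch) and whether its parent already lies inside the bag. The number of partitions of a $w$-set is $2^{O(w\log w)}$ and dominates the state count.

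The main work — and the only real obstacle — is the running-time accounting rather than any new combinatorial idea. Two points deserve care. First, one must check that accumulating the separator sets $Y$ over all $O(\log k)$ layers of $\cal T$ keeps each bag of size $O(k\log k)$; this is exactly what the telescoping bound~(\ref{ineq1}) certifies, so the aggregate width, and not merely the per-level increment $2k$, stays within $O(k\log k)$. Second, substituting $w=O(k\log k)$ into the state bound uses $\log w=O(\log(k\log k))=O(\log k)$, whence the number of states is $2^{O(w\log w)}=2^{O(k\log k\cdot\log k)}=2^{O(k\log^2 k)}$. Together with the polynomial factors for building the decomposition and the linear pass of the dynamic program, this yields the claimed bound $2^{O(k\log^2 k)}\cdot n^{O(1)}$, and explains why the exponent carries a $\log^2 k$ here whereas the acyclic case of Corollary~\ref{cor:acyclic}, with width only $4k$, carries a single $\log k$.
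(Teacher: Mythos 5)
Your proposal is correct and follows essentially the same route as the paper, which simply invokes the argument of Corollary~\ref{cor:acyclic} with Theorem~\ref{mainth} in place of Theorem~\ref{dagth}: either find $k$ leaves directly or construct a path decomposition of width $O(k\log k)$ and run the standard $2^{O(w\log w)}$ dynamic program. Your accounting of why the exponent becomes $k\log^2 k$ (namely $w\log w$ with $w=O(k\log k)$) is exactly the intended calculation.
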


\section{Discussion and Open Problems}\label{dopsec}

In this paper, we  initiated  the algorithmic and combinatorial
study of the {\sc Directed Maximum Leaf Out-Branching} problem.  In
particular, we showed that for every strongly connected digraph $D$
of order $n$ and with minimum in-degree at least 3,
$\ell_s(D)=\Omega(n^{1/3})$.  An interesting open combinatorial
question here is whether this bound is tight. If it is not, it would
be interesting to find the maximum number $r$ such that
$\ell_s(D)=\Omega(n^r)$ for every strongly connected digraph $D$ of
order $n$ and with minimum in-degree at least 3. It follows from our
results that $\frac{1}{3}\le r\le \frac{1}{2}.$

We also provided an algorithm of time complexity $2^{O(k\log^2
k)}\cdot n^{O(1)}$ which solves the $k$-DMLOB problem for a strongly
connected digraph $D$. The algorithm is based on a combinatorial
bound on the pathwidth of the underlying graph of $D$. Instead of
using results from
Section~\ref{sec:pathewidth_of_underlying_graphs}, one can use
Bodlaender's algorithm \cite{Bodlaender96} computing (for fixed $k$)
tree decomposition of width $k$ (if such a decomposition exists) in
linear time. Combined with our combinatorial bounds this yields a
linear time algorithm for
 $k$-DMLOB (for a strongly connected
digraphs). However, the exponential dependence of $k$ in
Bodlaender's algorithm is $c^{k^3}$ for some large constant $c$.

\medskip
Finally, let us observe that  while our results are for strongly
connected digraphs, they can be extended to a larger class of
digraphs. Notice that $\ell(D)\ge \ell_s(D)$ for each digraph $D$.
Let $\cal L$ be the family of digraphs $D$ for which either
$\ell_s(D)=0$ or $\ell_s(D)=\ell(D)$. The following assertion shows
that $\cal L$ includes a large number digraphs including all
strongly connected digraphs and acyclic digraphs (and, also, the
well-studied classes of semicomplete multipartite digraphs and
quasi-transitive digraphs, see \cite{bang2000} for the definitions).

\begin{proposition}[\cite{alonLNCS4596}]\label{L} Suppose that a digraph $D$ satisfies
the following property: for every pair $R$ and $Q$ of distinct
strong components of $D$, if there is an arc from $R$ to $Q$ then
each vertex of $Q$ has an in-neighbor in $R$. Then $D\in \cal L$.
\end{proposition}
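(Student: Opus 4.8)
The plan is to reduce the statement to a single extension claim. Since every out-branching is in particular an out-tree, we always have $\ell_s(D) \le \ell(D)$, so it suffices to treat the case $\ell_s(D) > 0$ and prove $\ell_s(D) \ge \ell(D)$. Assuming $D$ has an out-branching, Proposition \ref{iffoutb} gives a unique strong component $R_0$ with no incoming arcs, and a standard condensation (walk-backwards-along-in-arcs) argument shows every vertex of $D$ is reachable from $R_0$. Fix a maximum-leaf out-tree $T$ with $|L(T)| = \ell(D)$, rooted at a vertex $s$; the goal is to turn $T$ into an out-branching without losing leaves.

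First I would extend $T$ \emph{forward} greedily: while some vertex $v \notin V(T)$ has an in-neighbor $u$ in the current tree, attach $v$ as a child of $u$. Each step keeps the object an out-tree and never decreases the number of leaves (if $u$ was a leaf it ceases to be one but $v$ becomes a new leaf; otherwise a leaf is simply gained). This terminates with an out-tree $T_1$ on the set $V'$ of all vertices reachable from $s$, with $|L(T_1)| \ge \ell(D)$. If $V' = V(D)$, then $T_1$ is already an out-branching and we are done, so the interesting case is $S := V(D) \setminus V' \ne \emptyset$.

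The structural heart is the analysis of $S$. Since the greedy process stalled, no arc enters $S$ from $V'$; and because a strongly connected component cannot straddle $V'$ and $S$ (an internal arc would cross from $V'$ to $S$), the set $S$ is a union of whole strong components. As $S \ne \emptyset$ forces $s \notin R_0$ and hence $R_0 \subseteq S$, and as no arc enters $S$ from $V'$, every in-arc (in $D$) to a component of $S$ stays inside $S$; thus $R_0$ is also the unique source of $D[S]$, so by Proposition \ref{iffoutb} the digraph $D[S]$ has an out-branching $B_S$ rooted in $R_0$. It then remains to splice $B_S$ onto $T_1$ through a single arc into $s$. This is exactly where the hypothesis is used: letting $Q_s$ be the strong component of $s$, any in-neighbor component of $Q_s$ lying in $V'$ would close a directed cycle in the acyclic condensation, so all in-neighbor components of $Q_s$ lie in $S$; since $Q_s \ne R_0$ is not the source, at least one such component $C \subseteq S$ exists, and the component-wise in-neighbor hypothesis then yields an arc $(w,s)$ with $w \in C \subseteq S$. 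Setting $B = B_S \cup \{(w,s)\} \cup T_1$ produces an out-branching rooted in $R_0$; hanging $T_1$ below $w$ preserves every leaf of $T_1$ (only $w$ itself may stop being a leaf), so $|L(B)| \ge |L(T_1)| \ge \ell(D)$, completing the argument.

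The step I expect to be the main obstacle is the splicing: once the forward extension stalls, $S$ receives no arcs from the built tree, so $S$ cannot be hung below $T_1$; instead the whole construction must be re-rooted inside $S$ with $T_1$ hanging beneath it, which demands an arc from $S$ directly into the chosen root $s$. Producing that arc is precisely what the hypothesis buys us, through the observation that the in-neighbor components of $Q_s$ are trapped in $S$. The remaining points are routine verifications: that $S$ is a union of strong components, that $R_0$ stays the unique source of $D[S]$, and the leaf bookkeeping showing the single connecting arc costs at most one leaf of $B_S$ while retaining all leaves of $T_1$.
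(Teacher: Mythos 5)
Your argument is correct and complete. Note that this paper does not actually prove Proposition~\ref{L}; it imports it by citation from \cite{alonLNCS4596}, so there is no in-paper proof to compare against. Your proof stands on its own: the reduction to showing $\ell_s(D)\ge\ell(D)$ when $\ell_s(D)>0$ is right; the greedy forward extension of a maximum-leaf out-tree $T$ to $T_1$ on the reachability set $V'$ of its root $s$ is leaf-nonincreasing in the correct direction ($|L(T_1)|\ge\ell(D)$); the observation that when the extension stalls no arc leaves $V'$, hence $S=V(D)\setminus V'$ is a union of strong components containing the unique source component $R_0$ and $D[S]$ again has a unique source, is sound; and the one place the hypothesis is genuinely needed --- upgrading ``some vertex of $Q_s$ has an in-neighbour in the component $C\subseteq S$'' to ``$s$ itself has an in-neighbour $w\in C$'' --- is exactly where you invoke it. The final splice $B_S\cup\{(w,s)\}\cup T_1$ is a spanning out-tree (the vertex sets are disjoint, every non-root vertex acquires in-degree exactly one, and the arc count is $n-1$), and since the only vertex whose out-degree increases is $w\in S$, every leaf of $T_1$ survives, giving $\ell_s(D)\ge\ell(D)$. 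This matches the spirit of the argument in the cited source, which likewise extends a maximum-leaf out-tree to an out-branching using the component-wise in-neighbour property; I see no gap.
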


Let $\cal B$ be the family of digraphs that contain out-branchings.
The results of this paper proved for strongly connected digraphs can
be extended to the class ${\cal L}\cap {\cal B}$ of digraphs since
in the proofs we use only the following property of strongly
connected digraphs $D$: $\ell_s(D)=\ell(D)>0$.

For a digraph $D$ and a vertex $v$, let $D_v$ denote the subdigraph
of $D$ induced by all vertices reachable from $v.$ Using the
$2^{O(k\log^2 k)}\cdot n^{O(1)}$ algorithm for $k$-DMLOB on digraphs
in ${\cal L}\cap {\cal B}$ and the facts that (i) $D_v\in {\cal
L}\cap {\cal B}$ for each digraph $D$ and vertex $v$ and (ii)
$\ell(D)=\max\{\ell_s(D_v)| v\in V(D)\}$ (for details, see
\cite{alonLNCS4596}), we can obtain an $2^{O(k\log^2 k)}\cdot
n^{O(1)}$ algorithm for $k$-DMLOT on {\em all} digraphs. For acyclic
digraphs, the running time can be reduced to $2^{O(k\log k)}\cdot
n^{O(1)}$.

\2

\noindent{\bf Acknowledgements.} Research of N. Alon and M.
Krivelevich was supported in part by USA-Israeli BSF grants and by
grants from the Israel Science Foundation. Research of F. Fomin was
supported in part by the Norwegian Research Council. Research of G.
Gutin was supported in part by EPSRC.


\begin{thebibliography}{99}

\bibitem{AartsL97-Lo}
E.~Aarts and J.~K. Lenstra, editors.
\newblock {\em Local search in combinatorial optimization}.
\newblock Wiley-Interscience Series in Discrete Mathematics and Optimization.
  John Wiley \& Sons Ltd., Chichester, 1997.
\newblock A Wiley-Interscience Publication.

\bibitem{Al}
N. Alon, Transversal numbers of uniform hypergraphs. Graphs and
Combinatorics 6 (1990), 1--4.

\bibitem{AS} N. Alon and J. Spencer, {\em The Probabilistic Method}, Second Edition.
Wiley, NY, 2000.


\bibitem {AlonFGKS07fsttcs}
N.~Alon, F.~V. Fomin,  G.~Gutin, M.~Krivelevich, and S.~Saurabh,
  {Better Algorithms and Bounds for Directed Maximum Leaf Problems}.
  Lect. Notes
Comput. Sci. 4855 (2007),  316-327.




\bibitem{alonLNCS4596} N.~Alon, F.~V.~Fomin, G.~Gutin, M.~Krivelevich and S.~Saurabh,
Parameterized Algorithms for Directed Maximum Leaf Problems. Lect.
Notes Comput. Sci. 4596 (2007),  352-362.


\bibitem{ArnborgP89-Li}
S.~Arnborg and A.~Proskurowski,
\newblock {  Linear time algorithms for NP-hard problems restricted
 to partial $k$-trees,} Discrete Appl. Math.  23  (1989),
  no. 1, 11--24.

\bibitem{bang2000} J. Bang-Jensen and G. Gutin, {\em Digraphs: Theory, Algorithms
and Applications}. Springer-Verlag, 2000.

\bibitem{Bodlaender89} H.L. Bodlaender, On linear time minor tests and depth-first
search. Journal of Algorithms 14 (1993), 1--23.

\bibitem{Bodlaender96} H.L. Bodlaender,
A Linear-Time Algorithm for Finding Tree-Decompositions of Small
Treewidth. SIAM Journal on Computing 25 (1996), 1305--1317.
\bibitem{BienstockRST91}
D.~Bienstock, N.~Robertson, P.~D. Seymour, and R.~Thomas.
\newblock Quickly excluding a forest.
\newblock {\em J. Comb. Theory Series B}, 52:274--283, 1991.


\bibitem{bonsmaLNCS2747} P.S. Bonsma, T. Brueggermann and G.J.
Woeginger, A faster FPT algorithm for finding spanning trees with
many leaves. Lect. Notes Computer Sci. 2747 (2003), 259--268.

\bibitem{BoDo} P.S. Bonsma and F. Dorn, An FPT Algorithm for Directed Spanning
k-Leaf. Preprint 046-2007, Combinatorial Optimization \& Graph
Algorithms Group, TU Berlin, Nov. 2007.

\bibitem{CWY}
Y. Caro, D. B. West  and R. Yuster, Connected domination and
spanning trees with many leaves. SIAM J. Discrete Math. 13 (2000),
202--211.

\bibitem{Cesati06}
M. Cesati, Compendium of parameterized problems, Sept. 2006.
\newline {\tt http://bravo.ce.uniroma2.it/\hskip0pt home/\hskip0pt
  cesati/\hskip0pt
  research/\hskip0pt compendium.pdf}



\bibitem{chung1990} F.R.K. Chung, Separator theorems and their applications,
In {\em Paths, flows, and VLSI-layout (Bonn, 1988)}, Series {\em
Algorithms Combin.}, 9 (1990), 17--34, Springer, Berlin.



\bibitem{DingJS01} G. Ding, Th. Johnson, and P. Seymour, Spanning trees with many
leaves. Journal of Graph Theory 37 (2001), 189--197.

\bibitem{downey1999} R.G. Downey and M.R.~Fellows,
{\em Parameterized Complexity\/}, Springer-Verlag, 1999.

\bibitem{drescher} M. Drescher and A. Vetta, An approximation
algorithm for the maximum leaf spanning arborescence problem.
Manuscript, 2007.

\bibitem{estivill} V. Estivill-Castro, M.R. Fellows, M.A. Langston,
and F.A. Rosamond, FPT is P-Time Extremal Structure I. Proc. ACiD
(2005), 1--41.

\bibitem{fominGK06} F. V. Fomin, F. Grandoni and D. Kratsch,
Solving Connected Dominating Set Faster Than $2^n$. Lect. Notes
Comput. Sci. 4337 (2006), 152--163.


\bibitem{FellowsMRS00} M.R. Fellows, C. McCartin, F.A. Rosamond, and U. Stege,
Coordinated kernels and catalytic reductions: An improved FPT
algorithm for max leaf spanning tree and other problems. Lect. Notes
Comput. Sci. 1974 (2000), 240--251.


\bibitem{fellows} M. Fellows, Private communications, 2005-2006.

\bibitem{FlumGrohebook} J.~Flum and M.~Grohe, {\em Parameterized
    Complexity Theory}, Springer-Verlag, 2006.


\bibitem{GalbiatMM97} G. Galbiati, A. Morzenti, and F. Maffioli, On the
approximability of some maximum spanning tree problems. Theoretical
Computer Science 181 (1997), 107--118.

\bibitem{GareyJ79} M.R. Garey and D.S. Johnson, {\em Computers and
Intractability}, W.H. Freeman and Co., New York, 1979.

\bibitem{GriggsW92} J.R. Griggs and M. Wu, Spanning trees in graphs of minimum degree
four or five. Discrete Mathematics 104 (1992), 167--183.



\bibitem{gutin} G. Gutin and A. Yeo, Some Parameterized Problems on
Digraphs. To appear in The Computer Journal.

\bibitem{KleitmanW91} D.J. Kleitman and D.B. West, Spanning trees with many leaves.
SIAM Journal on Discrete Mathematics 4 (1991), 99--106.

\bibitem{Kinnersley92} N.~G. Kinnersley, The vertex separation number of a graph
equals its path-width,  Information Processing Letters 42 (1992),
345--350.

\bibitem{KirousisP85}
L.~M. Kirousis and C.~H. Papadimitriou, Interval graphs and
  searching, Discrete Mathematics 55 (1985), 181--184.


\bibitem{LinialS87} N. Linial and D. Sturtevant (1987). Unpublished result.

\bibitem{LuR98} H.-I. Lu and R. Ravi, Approximating maximum leaf spanning trees in
almost linear time. Journal of Algorithms 29 (1998), 132--141.



\bibitem{Niedermeierbook06} R.~Niedermeier, {\em Invitation to Fixed-Parameter
Algorithms}, Oxford University Press, 2006.




\bibitem{Solis-Oba98} R. Solis-Oba, 2-approximation algorithm for finding a spanning
tree with the maximum number of leaves. Lect. Notes Comput. Sci.
1461 (1998), 441--452.









\end{thebibliography}

\end{document}